\pdfoutput=1

\documentclass{article}

\usepackage{arxiv}

\usepackage{times}

\usepackage{multicol}
\usepackage[bookmarks=true]{hyperref}

\usepackage[pdftex]{graphicx}
\usepackage{epstopdf}
\usepackage{amsmath} 
\usepackage{amssymb} 
\usepackage{amsthm}
\usepackage[noend]{algpseudocode}
\usepackage{algorithm}
\usepackage{graphicx}
\usepackage{float}
\usepackage{setspace}
\usepackage{subfigure}
\usepackage{color}
\usepackage{cases}
\usepackage{cite}
\usepackage{ mathrsfs }
\usepackage{comment} 
\usepackage{tabularx}
\usepackage{verbatim}
\usepackage{caption}
\usepackage{comment}
\usepackage{fourier}
\usepackage[bb=ams, cal=cm, scr=boondox, frak=euler]{mathalpha}


\newtheorem{theorem}{Theorem}[section]
\newtheorem{lemma}[theorem]{Lemma}

\newtheorem{corollary}[theorem]{Corollary}
\newtheorem{problem}[theorem]{Problem}
\newtheorem{assumption}[theorem]{Assumption}
\newtheorem{hypothesis}[theorem]{Hypothesis}

\newtheorem{definition}[theorem]{Definition}

\title{Dependent Reachable Sets for the Constant Bearing Pursuit Strategy}

\author{
Venkata Ramana Makkapati\\
Honda Aircraft Company\\
Greensboro, NC, USA\\
\texttt{vmakkapati3@gmail.com}\\
\And
Tulasi Ram Vechalapu\\
Technion - Israel Institute of Technology\\
Haifa, Israel\\
\texttt{tulasiram217@campus.technion.ac.il}\\
\And
Vinodhini Comandur\\
University of Colorado Boulder\\
Boulder, CO, USA\\
\texttt{vinodhini.comandur@colorado.edu}\\
\And
Seth Hutchinson\\
Northeastern University\\
Boston, MA, USA\\
\texttt{s.hutchinson@northeastern.edu}\\
}

\begin{document}

\maketitle

	
\begin{abstract}
This paper introduces a novel reachability problem for the scenario involving two agents, where one agent follows another agent using a feedback strategy.
The geometry of the reachable set for an agent, termed \emph{dependent reachable set}, is characterized using the constant bearing pursuit strategy as a case study.
Key theoretical results are presented that provide geometric bounds for the associated dependent reachable set.
Simulation results are presented to empirically establish the shape of the dependent reachable set.
In the process, an original optimization problem is formulated and analyzed for the constant bearing pursuit strategy.
\end{abstract}

\textbf{Keywords:} Reachability, Feedback Strategy, Multi-agent Systems, Apollonius Circle, Constant Bearing Strategy, Switching Control

	
	
\section{Introduction}
\label{sec:intro}
Reachability analysis is a key formal verification tool for autonomous and cyber-physical systems \cite{althoff2014online, FAURE2022272, herbert2017fastrack}.
It provides provable safety guarantees and enables fault-tolerant control under uncertainties~\cite{chen2022reach, jeremy2011app}.
Forward reachability analysis computes the \emph{forward reachable set}, which contains all states a system (or agent) can reach over time, given the initial state and the set of allowable control inputs. 
In contrast, backward reachability analysis identifies the \emph{backward reachable set}, the set of states from which a given target state can be reached at a specified time.
In general, it is challenging to analytically compute the forward or backward reachable sets for nonlinear systems, even for most constrained linear systems \cite{althoff2021set}.
Consequently, several approaches, including those based on optimal control theory \cite{mitchell2005time, chen2018hamilton, sun2017pursuit}, barrier functions \cite{kong2018reachable}, zonotopes \cite{Li11062023, Zhang12032021}, and trajectory sensitivity analysis \cite{ramdani2008reachability, chuchu2016locally}, were developed to numerically estimate reachable sets for linear and nonlinear systems.
In recent years, there has been an increased focus on reachability analysis for multi-agent systems \cite{thapliyal2024algorithm, wang2021formal} and neural network controllers \cite{gates2023scalable, everett2021reach}. 

In the context of formal verification for multi-agent systems, where one agent (referred to as \emph{dependent agent}) follows another agent (referred to as \emph{independent agent}) using a feedback strategy, it is important to characterize the reachable set of the dependent agent for a given strategy, termed \emph{dependent reachable set} (DRS) in this paper.
This problem is particularly relevant in defense and security applications, where it is critical to estimate the worst-case scenario for the dependent agent by evaluating the set of states that the agent can reach when following an adversary using a feedback strategy.
Conversely, to plan for contingencies in strategic engagements, the independent agent must identify the set of locations to which it can lead the dependent agent.
To this end, in this paper, DRS for two-dimensional problems involving two agents is formalized, which can be easily extended to systems with more than two agents and to three-dimensional environments.

Regarding the applicability of DRS, Comandur et al. noted that for a class of pursuit-evasion games involving deception, estimating such reachable sets for the deceptive player (which in our paper corresponds to the independent agent) is essential to assess the relevance of its strategy \cite{comandur2024desensitization}.
Their work considered the pure pursuit strategy for the dependent agent.
However, the lack of closed-form solutions in the existing literature on pure pursuit strategy~\cite{Shneydor, makkapati2018optimal} and our preliminary analysis suggest that the geometry of the associated DRS cannot be expressed analytically in closed form.
We therefore focus on the constant bearing pursuit strategy, which is widely used in missile guidance problems \cite{Shneydor}.
The constant bearing pursuit strategy is also known as ``parallel navigation" in the robotics literature \cite{strydom2015biologically}, and was applied to mobile robots to reach a moving target \cite{belkhouche2007parallel, rafie2010time,makkapati2019optimal}.
The geometric elegance of constant bearing pursuit is supported by its connection to the Apollonius circle~\cite{makkapati2018optimal}, a property also evident in our results.

A basic approach to computing the DRS at a fixed time $t$ is to first determine the set of states reachable by the dependent agent for each possible state of the independent agent at the same time $t$, and then perform a union of these sets.
Note that the independent agent may reach a certain state via multiple trajectories, allowing the dependent agent to reach multiple corresponding states.
In this regard, an optimization problem is formulated to identify the maximum and minimum relative distances that the dependent agent can achieve, given the initial conditions and current position of the independent agent, at a specified time, while following the constant bearing pursuit strategy.
Although an analytical solution to this optimization remains elusive, simulations reveal interesting geometric properties linked to ellipses.

To the best of the authors' knowledge, the concept of DRS is novel and unexplored in the multi-agent systems literature.
The contributions of this paper are as follows.
\begin{enumerate}
    \item The concept of DRS is formalized.

    \item The geometry of DRS for the constant bearing pursuit strategy is characterized using theoretical and simulation results.

    \item A theoretical result that establishes the link between the DRS of the constant bearing pursuit strategy and the Apollonius circle is presented.

    \item An original optimization problem for maximum and minimum relative distances under the constant bearing pursuit strategy is formulated and analyzed.

    \item A new property of ellipses is observed in the context of the constant bearing pursuit strategy using empirical results.
\end{enumerate}

The remainder of the paper is organized as follows. Section~\ref{sec:prob} presents the problem formulation and the underlying assumptions. 
Section~\ref{sec:reach_set} reviews the theory of reachable sets. Section~\ref{sec:main_result} presents key theoretical and simulation results describing the geometry of the DRS for constant bearing pursuit. Section~\ref{sec:optimize} analyzes the optimization problem to find the closest and the farthest points from the independent agent that the dependent agent can reach while employing the constant bearing pursuit strategy. Section~\ref{sec:conclude} concludes the paper and outlines directions for future work.

	
\section{Problem Statement}
\label{sec:prob}

Consider the standard two-dimensional scenario involving two agents (denoted using the subscripts $I$ and $D$) moving in the Cartesian plane at a constant speed by controlling their respective heading angles.
The agents' dynamics can be expressed as
\begin{align}
    \dot{\mathbf{x}}_i(t) &= \mathbf{u}_i(t), \quad \mathbf{x}_i(0) = \mathbf{x}_i^0, \quad \|\mathbf{u}_i(t)\|_2 = v_i, \label{eq:agent_Dyn}
\end{align}
where $\mathbf{x}_i(t) = [x_i(t),~y_i(t)]^\top \in \mathbb{R}^2$, $i \in \{I,D\}$ denotes the agent's position at time $t\geq 0$, and $\|\cdot\|_2$ denotes the 2-norm.
Here, $x_i$ is the horizontal coordinate (also referred to as $x$-coordinate) and $y_i$ is the vertical coordinate (also referred to as $y$-coordinate).
Similarly, $\mathbf{u}_i(t)$ is the agent's instantaneous velocity vector (control input) at time $t$, $\mathbf{x}_i^0$ is the initial position at time $t=0$, and $v_i$ is the agent's speed, which is a constant.
Given $\mathcal{U}_i$ is the set of all piecewise continuous functions in time $t$ for which the range is the set $\mathbb{U}_i = \{\mathbf{u} \in \mathbb{R}^2 : \|\mathbf{u}\|_2 = v_i\}$, the control function $\mathbf{u}_i(.) \in \mathcal{U}_i$.
At time $t$, the heading angle of an agent given the control vector $\mathbf{u}_i(t)$ is denoted as $\psi_i(t) \in (-\pi,\pi]$, where $\mathbf{u}_i(t) = [v_i\cos\psi_i(t),~v_i\sin\psi_i(t)]^\top$.
The heading angle is measured from the horizontal $x$-axis, with the counterclockwise direction being positive.
In this paper, we examine instances involving two agents where one, the dependent agent  (indicated using the subscript $D$)
\emph{follows} the other, the independent agent (indicated using the subscript $I$), 
as part of its mission or task, using the constant bearing pursuit strategy.

Without loss of generality, we set $\mathbf{x}^0_{D} = [0,0]^\top$ (origin) and $\mathbf{x}^0_I = [a,0]^\top$.
Consequently, the initial line of sight (LOS) is parallel to the horizontal axis.
Since the dependent agent follows the constant bearing pursuit strategy, it chooses a heading angle at a given time instant such that the LOS does not rotate, i.e., the LOS is parallel to the horizontal axis for all time, and the relative distance between the agents does reduce \cite{makkapati2018optimal}.
Therefore,
\begin{align}
    \psi_D(t) = \sin^{-1} \left(\dfrac{v_I}{v_D}\sin \psi_I(t)\right). \label{eq:const_bear}
\end{align}
In the remainder of this paper, we consider only the case where $\mathbf{u}_D(t) = [v_D\cos\psi_D(t),~v_D\sin\psi_D(t)]^\top$ for $\psi_D(t)$ given in (\ref{eq:const_bear}).

\begin{assumption} \label{assume:speed}
    The speed of the dependent agent is greater than that of the independent agent, i.e. $v_D > v_I$.
\end{assumption}

Note that $v_D \geq v_I$ is a necessary and sufficient condition for $\psi_D(t)$ to exist, given any $\psi_I(t) \in (-\pi,\pi]$, as can be inferred from (\ref{eq:const_bear}).
Furthermore, given $v_D\geq v_I$ and $\psi_I(t) \in (-\pi,\pi]$, $\psi_D(t) \in \left[-\dfrac{\pi}{2},\dfrac{\pi}{2}\right]$.
In this paper, $v_D > v_I$ is considered for presenting theoretical and empirical results.
The analysis corresponding to the limiting case of $v_D = v_I$ is presented in Section \ref{subsec:limit_case}.

Given the dynamics of the independent agent per (\ref{eq:agent_Dyn}), its trajectory $\mathbf{x}_I(t),~t\geq0$ is a function of the control input $\mathbf{u}_I \in \mathcal{U}_I$.
Since $\mathbf{u}_D(t)$ is a function of $\mathbf{u}_I(t)$, per (\ref{eq:const_bear}), for every trajectory of the independent agent, there is a corresponding trajectory of the dependent agent.
Now, the problem statement examined in this paper is presented below.

\begin{problem}\label{prob:drs}
    For the set of all feasible trajectories of the independent agent, determine the corresponding set, $\mathcal{D}(t)$, of positions (or states) that will be reached by the dependent agent following the constant bearing pursuit strategy.
\end{problem}

In other words, given that the dependent agent is committed to the constant bearing pursuit strategy, what is the set of points to which the independent agent can drive the dependent agent?
When $v_D > v_I$, every possible trajectory of the independent agent will result in a \emph{capture} in finite time by the dependent agent that follows the constant bearing pursuit strategy.
Here, capture refers to the event $\|\mathbf{x}_D(t)-\mathbf{x}_I(t)\|_2 \approx 0$.
In this regard, the paper considers that the set $\mathcal{D}(t)$ comprises only those points that correspond to the active pursuit trajectories at a given time.
By active pursuit trajectories, we mean those trajectories of the dependent agent that have yet to result in capture for the corresponding trajectory of the independent agent.
The maximum capture time $t_c = a/(v_D-v_I)$ corresponds to an instance in which the independent agent chooses the control input $\mathbf{u}_I(t) = [v_I,0]^\top$ for all time until capture.
In the pursuit-evasion literature, this chosen control input of the independent agent corresponds to what is known as \emph{pure evasion}, which guarantees the maximum capture time.
Note that as $v_D \rightarrow v_I$, $t_c \rightarrow \infty$.

In this paper, a point from the 2D plane is either declared as a column vector of the form $\mathbf{x} = [x,y]^\top$, or denoted in the Cartesian form $X\left(x,y\right)$ as an ordered pair. 
The following section presents important results for the reachable set of the independent agent, which will be used to characterize the set $\mathcal{D}(t)$ for the dependent agent.


\section{Reachable Sets}
\label{sec:reach_set}

The standard definition for the reachable set of the independent agent at time $t \geq 0$ is given below.

\begin{definition}
The reachable set of the independent agent, $\mathcal{R}_I(\mathbf{x}_I^0,t)$, at time $t \geq 0$ with the initial state at $\mathbf{x}_I^0$ is the set of all points that the agent can reach at time $t$:\label{def:reachable_set}
\begin{align}
    \mathcal{R}_I(\mathbf{x}_I^0,t) = \Bigg\lbrace\mathbf{x}\in\mathbb{R}^2 : \exists~\mathbf{u}_I \in \mathcal{U}_I,~\mathbf{x} = \mathbf{x}_I^0 + \int_0^t \mathbf{u}_I(\tau) \mathrm{d} \tau \Bigg\rbrace.
\end{align}
\end{definition}

Hereafter, the dependence of the reachable set on the initial condition, which is apparent, is dropped for brevity, and the reachable set at time $t$ is denoted as $\mathcal{R}_I(t)$.
A closed-form expression for the reachable set of the independent agent, per the dynamics in (\ref{eq:agent_Dyn}), can be obtained using the following lemmas. 
Though the analytical solution for the reachable set is common knowledge, to the best of the authors' knowledge, a formal proof for the same is not found in the existing literature. 
In addition, the theoretical results for the dependent reachable set, presented in Section \ref{sec:main_result}, are developed based on the following lemmas.

\begin{lemma}\label{lemma:1D_reach}
    The reachable set of an agent in a one-dimensional environment with the dynamics
    \begin{align}
        \dot{x}(t) = w(t), \quad x(0) = 0, \label{eq:1D_Dyn}
    \end{align}
    at time $t=T \in [0, \infty)$ is $[-T, T]$, where $x(t) \in \mathbb{R}$, the control function $w \in \mathcal{W}$, 
    and $\mathcal{W}$ is the set of all piecewise continuous functions in time $t\in [0,\infty)$ with range $\{-1, 1\}$.
\end{lemma}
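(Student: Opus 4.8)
The plan is to prove the two inclusions $[-T,T] \subseteq \mathcal{R}(T)$ and $\mathcal{R}(T) \subseteq [-T,T]$, where I write $\mathcal{R}(T)$ for the reachable set of the one-dimensional system \eqref{eq:1D_Dyn} at time $T$. The second inclusion is the easy, purely analytic half: for any admissible control $w \in \mathcal{W}$, since $|w(\tau)| = 1$ for all $\tau$, the fundamental theorem of calculus gives $|x(T)| = \left| \int_0^T w(\tau)\,\mathrm{d}\tau \right| \le \int_0^T |w(\tau)|\,\mathrm{d}\tau = T$, so every reachable point lies in $[-T,T]$. This handles $T = 0$ trivially as well, since then $x(T) = 0$.

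The first inclusion requires exhibiting, for each target $x^\star \in [-T,T]$, an admissible control $w$ steering the state from $0$ to $x^\star$ in time $T$. The natural construction is a single switch: run $w \equiv 1$ on $[0,s)$ and $w \equiv -1$ on $[s,T]$ for a suitable switch time $s \in [0,T]$; this yields $x(T) = s - (T - s) = 2s - T$, which sweeps out all of $[-T,T]$ as $s$ ranges over $[0,T]$. So I would set $s = (x^\star + T)/2$, check $s \in [0,T]$, and note that the resulting $w$ is piecewise continuous with range contained in $\{-1,1\}$ (with the convention that a single "missing" endpoint value, e.g. at $s$ or when $s \in \{0,T\}$, does not affect the integral), hence $w \in \mathcal{W}$ and $x^\star \in \mathcal{R}(T)$.

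I do not anticipate a serious obstacle here; the only mildly delicate point is making sure the switching control genuinely satisfies the definition of $\mathcal{W}$ — in particular that "piecewise continuous with range $\{-1,1\}$" is read so as to allow a finite number of jump discontinuities (otherwise no nonconstant admissible control would exist). One should also be slightly careful at the boundary cases $x^\star = \pm T$, where the control is the constant $\pm 1$ and no switch is needed, and at $T = 0$, where $\mathcal{R}(0) = \{0\} = [-0,0]$. Modulo these conventions, both inclusions are immediate, and combining them gives $\mathcal{R}(T) = [-T,T]$ as claimed.
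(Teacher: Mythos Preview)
Your proposal is correct and follows essentially the same approach as the paper: a single-switch control $w\equiv 1$ on $[0,s)$ and $w\equiv -1$ on $[s,T]$ with $s=(x^\star+T)/2$ to hit every point of $[-T,T]$, together with the unit-speed bound $|x(T)|\le T$ for the reverse inclusion. If anything, you are slightly more careful than the paper in explicitly treating the endpoints $x^\star=\pm T$, the case $T=0$, and the piecewise-continuity convention for $\mathcal{W}$.
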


\begin{proof}
    Consider the single-switching control function
    \begin{align}
    w_s(t) = \begin{cases}
    1, \quad \text{if } 0\leq t < t_s,\\
    -1, \quad \text{if } t_s \leq t \leq T,
    \end{cases} \label{eq:switch_func}
    \end{align}
    where $t_s \in (0, T)$ is the switching time. 
    Any point $x_s \in (-T,T)$ can be reached with the control function in (\ref{eq:switch_func}) using switching time
    \begin{align}
        t_s &= \dfrac{x_s + T}{2}.
    \end{align}
    It can be observed that the switching times vary between 0 and $T$ to span all the points in the set $(-T,T)$. 
    Since the agent moves at a unit speed, the maximum distance it can reach from its initial point at time $T$ is $T$.
    Therefore, in the one-dimensional case, the reachable set at time $T$ is $[-T,T]$.
\end{proof}

\begin{lemma}\label{lemma:2D_reach}
    The reachable set of the independent agent, per the dynamics in (\ref{eq:agent_Dyn}), at time $T \in [0, \infty)$ is the circle with its center at $\mathbf{x}_I^0$ and radius $v_I T$.
\end{lemma}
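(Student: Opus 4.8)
The plan is to establish the claimed equality by proving the two inclusions separately, interpreting the ``circle of radius $v_I T$'' as the closed disk $\{\mathbf{x}\in\mathbb{R}^2 : \|\mathbf{x}-\mathbf{x}_I^0\|_2 \le v_I T\}$. Throughout, let $\mathbf{x}$ be a candidate point and write $r = \|\mathbf{x}-\mathbf{x}_I^0\|_2$.

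First I would prove the outer bound $\mathcal{R}_I(T) \subseteq \{\mathbf{x} : \|\mathbf{x}-\mathbf{x}_I^0\|_2 \le v_I T\}$ directly from Definition~\ref{def:reachable_set} and the triangle inequality for integrals: for any admissible $\mathbf{u}_I \in \mathcal{U}_I$,
\begin{align}
    \left\|\mathbf{x}_I^0 + \int_0^T \mathbf{u}_I(\tau)\,\mathrm{d}\tau - \mathbf{x}_I^0\right\|_2 \;\le\; \int_0^T \|\mathbf{u}_I(\tau)\|_2\,\mathrm{d}\tau \;=\; v_I T,
\end{align}
since $\|\mathbf{u}_I(\tau)\|_2 = v_I$ for all $\tau$. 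Hence every reachable point lies within distance $v_I T$ of $\mathbf{x}_I^0$.

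For the reverse inclusion I would reduce to the one-dimensional result of Lemma~\ref{lemma:1D_reach}. Fix a target $\mathbf{x}$ with $r \le v_I T$. Because the speed constraint $\|\mathbf{u}_I\|_2 = v_I$ is invariant under rotations, it is enough to treat the case $\mathbf{x}-\mathbf{x}_I^0 = [r,0]^\top$ (a rotation applied to an admissible control function is again admissible, and rotates the resulting endpoint accordingly). Now restrict attention to controls of the form $\mathbf{u}_I(\tau) = [v_I\, w(\tau),\,0]^\top$ with $w(\tau)\in\{-1,1\}$; these lie in $\mathcal{U}_I$ and keep the $y$-coordinate fixed, so the resulting endpoint is $\mathbf{x}_I^0 + [\,v_I\!\int_0^T w(\tau)\,\mathrm{d}\tau,\,0]^\top$. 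By Lemma~\ref{lemma:1D_reach}, $\int_0^T w(\tau)\,\mathrm{d}\tau$ ranges over all of $[-T,T]$, so the $x$-displacement sweeps $[-v_I T, v_I T]$; in particular the value $r \in [0, v_I T]$ is attained (concretely, by a single-switching control as in~(\ref{eq:switch_func})), giving $\mathbf{x}\in\mathcal{R}_I(T)$. Combining the two inclusions yields $\mathcal{R}_I(T) = \{\mathbf{x}\in\mathbb{R}^2 : \|\mathbf{x}-\mathbf{x}_I^0\|_2 \le v_I T\}$, the closed disk of radius $v_I T$ centred at $\mathbf{x}_I^0$.

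I do not anticipate a serious obstacle; the only point requiring a little care is the inner inclusion, where one must remember that the agent can still reach points strictly inside the disk despite moving at fixed speed $v_I$ — this is what the ``out-and-back'' (net-zero displacement) portion of the $\{-1,1\}$-valued control, supplied by Lemma~\ref{lemma:1D_reach}, accomplishes. An alternative that avoids the rotation argument is to write the target direction as $[\cos\theta,\sin\theta]^\top$, move at $v_I[\cos\theta,\sin\theta]^\top$ for time $r/v_I$, and spend the remaining time $T - r/v_I$ on a detour of net-zero displacement; I would mention this only as a remark.
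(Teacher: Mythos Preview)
Your proof is correct and follows essentially the same approach as the paper: both reduce the inner inclusion to the one-dimensional result of Lemma~\ref{lemma:1D_reach} along a fixed direction (the paper parametrizes by an angle $\theta$ and uses the switching control~(\ref{eq:switch_func_2D}), you use rotation invariance to normalize the direction, which is the same idea). The only notable difference is that you prove the outer bound explicitly via the triangle inequality for integrals, whereas the paper's proof simply asserts the final equality without separately justifying that containment.
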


\begin{proof}
    Without loss of generality, let $\mathbf{x}_I^0$ be the origin.
    Based on Lemma \ref{lemma:1D_reach}, all points on the line segment that join the origin and the point $[v_IT\cos\theta,v_IT\sin\theta]^\top$ can be reached with the control function of the form
    \begin{align}
    \mathbf{u}_{1s}(t) = \begin{cases}
    [v_I\cos\theta,v_I\sin\theta]^\top, \quad &\text{if } 0\leq t < t_s,\\
    [-v_I\cos\theta,-v_I\sin\theta]^\top, \quad &\text{if } t_s \leq t \leq T,
    \end{cases}, \label{eq:switch_func_2D}
    \end{align}
    using an appropriate switching time, for all $\theta \in (-\pi/2,\pi/2]$.
    The points on the boundary of the reachable set, which are of the form $[v_IT\cos\theta,v_IT\sin\theta]^\top$, are reached by choosing the control $\mathbf{u}_I(t) = [v_I\cos\theta,v_I\sin\theta]^\top$, $t \in [0,T]$.
    Therefore, in the two-dimensional case
    \begin{align}
        \mathcal{R}_I(T) = \Big\lbrace\mathbf{x} \in \mathbb{R}^2 : \|\mathbf{x}\|_2 \leq v_IT\Big\rbrace.
    \end{align}
\end{proof}


\section{Dependent Reachable Sets}
\label{sec:main_result} 

The solution to Problem \ref{prob:drs} involves identifying a specialized reachable set of the dependent agent, termed the dependent reachable set.
DRS is defined for the agent that determines its control input using a feedback control function of the form $\mathbf{u}_D(t) = f(\mathbf{x}_D(t),\mathbf{x}_I(t),\mathbf{u}_I(t))$, which is a function of the instantaneous state and the control input of the independent agent.
A formal definition for the DRS is presented below.

\begin{definition}
The dependent reachable set of the dependent agent that employs the feedback strategy $f$, at time $t\geq 0$ is defined as
\begin{align}
    \mathcal{D}_f(\mathbf{x}_D^0,\mathbf{x}_I^0,t) &= \Bigg\lbrace\mathbf{x}\in\mathbb{R}^2 : \exists~\hat{\mathbf{u}}_I \in \mathcal{U}_I,~\mathbf{x} = \mathbf{x}_D^0 + \int_0^t f(\hat{\mathbf{x}}_D(\tau),\hat{\mathbf{x}}_I(\tau),\hat{\mathbf{u}}_I(\tau)) \mathrm{d} \tau \Bigg\rbrace,
\end{align}
where $\hat{\mathbf{x}}_D(\tau) = \mathbf{x}_D^0 + \int_0^\tau f(\hat{\mathbf{x}}_D(\eta),\hat{\mathbf{x}}_I(\eta),\hat{\mathbf{u}}_I(\eta)) \mathrm{d} \eta$ and $\hat{\mathbf{x}}_I(\tau) =\mathbf{x}^0_I + \int_0^\tau \hat{\mathbf{u}}_I(\eta) \mathrm{d} \eta$.
\end{definition}

For the problem described in Section \ref{sec:prob}, the reachable set of the independent agent $\mathcal{R}_I(t)$ is a circular region with its center at $\left(a,0\right)$ and radius $v_It$ (see Lemma \ref{lemma:2D_reach}).
Similarly, the reachable set of the dependent agent $\mathcal{R}_D(t)$ is also a circular region with its center at the origin and radius $v_D t$:
\begin{align}
    \mathcal{R}_D(t) &= \Bigg\lbrace\mathbf{x}\in\mathbb{R}^2 : \exists~\mathbf{u}_D \in \mathcal{U}_D,~\mathbf{x} = \mathbf{x}_D^0 + \int_0^t \mathbf{u}_D(\tau) \mathrm{d} \tau \Bigg\rbrace\nonumber \\
    &= \Big\lbrace\mathbf{x}\in\mathbb{R}^2: \|\mathbf{x}\|_2 \leq v_Dt\Big\rbrace. \label{eq:drs_reach_subset}
\end{align}
The boundary of the dependent agent's reachable set is denoted by $\partial \mathcal{R}_D(t)$.
Note that
\begin{align}
    \mathcal{D}_f(\mathbf{x}_D^0,\mathbf{x}_I^0,t) \subseteq \mathcal{R}_D(t).
\end{align}

In this paper, dependent reachable sets are analyzed in the situation where the dependent agent follows the constant bearing pursuit strategy.
The DRS for the constant bearing pursuit strategy is denoted by $\mathcal{D}(t)$, dropping the dependencies on the initial conditions $\mathbf{x}_D^0$ and $\mathbf{x}_I^0$ for brevity.

\subsection{Theoretical Results}
\label{subsec:Theory}

This subsection presents theoretical results for the characterization of DRS $\mathcal{D}(t)$.
As shown in Figure \ref{fig:main_instances}, two different scenarios are considered to develop proofs that characterize the shape of the DRS at a given instant in time $t \in [0,t_c]$.
In this regard, the line segment joining the points $(a,v_It)$ and $(a,-v_It)$ is denoted as $\mathcal{V}$, which is the set of all points that form the vertical diameter of the circle $\partial \mathcal{R}_I(t)$.
The line segment $\mathcal{V}$ is perpendicular to the initial LOS (the horizontal axis) and is represented by a dashed line in Figure \ref{fig:main_instances}.

The first scenario in Figure \ref{fig:main_instances} depicts the case of $\mathcal{V} \nsubseteq \mathcal{R}_D(t)$. 
In the first scenario, there are two possibilities: 1) the line segment $\mathcal{V}$ is outside the circle $\partial \mathcal{R}_D(t)$ (shown in Figure \ref{fig:main_instances}); or 2) the line segment $\mathcal{V}$ intersects the circle $\partial \mathcal{R}_D(t)$.
The first time instant when $\partial \mathcal{R}_D(t)$ intersects the line segment $\mathcal{V}$ is at $t_1 = a/v_D$, which is when $\mathcal{V}$ is tangential to the circle $\partial \mathcal{R}_D(t)$.
Note that at time $t_1$, the initial position of the independent agent $[a,0]^\top$ lies on the circle $\partial \mathcal{R}_D(t)$.
Subsequently, let $t_2$ be the last time instant when the circle $\partial \mathcal{R}_D(t)$ intersects the line segment $\mathcal{V}$.
Due to the symmetry associated with the problem, which can be observed in Figure \ref{fig:main_instances}, it can be deduced that at time $t_2$, $\partial \mathcal{R}_D(t)$ contains the end points of the line segment $\mathcal{V}$, $(a,v_It)$ and $(a,-v_It)$.
As a result, we have
\begin{align}
    v_D^2 t_2^2 &= a^2 + v_I^2 t_2^2 \nonumber \\
    \implies t_2 &= \frac{a}{\sqrt{v_D^2 - v_I^2}}.
\end{align}
Therefore, the first scenario, $\mathcal{V} \nsubseteq \mathcal{R}_D(t)$, corresponds to instances that occur in the time interval $0\leq t < t_2$.

\begin{figure}[htb!]
\centering
\includegraphics[width = 0.6\textwidth]{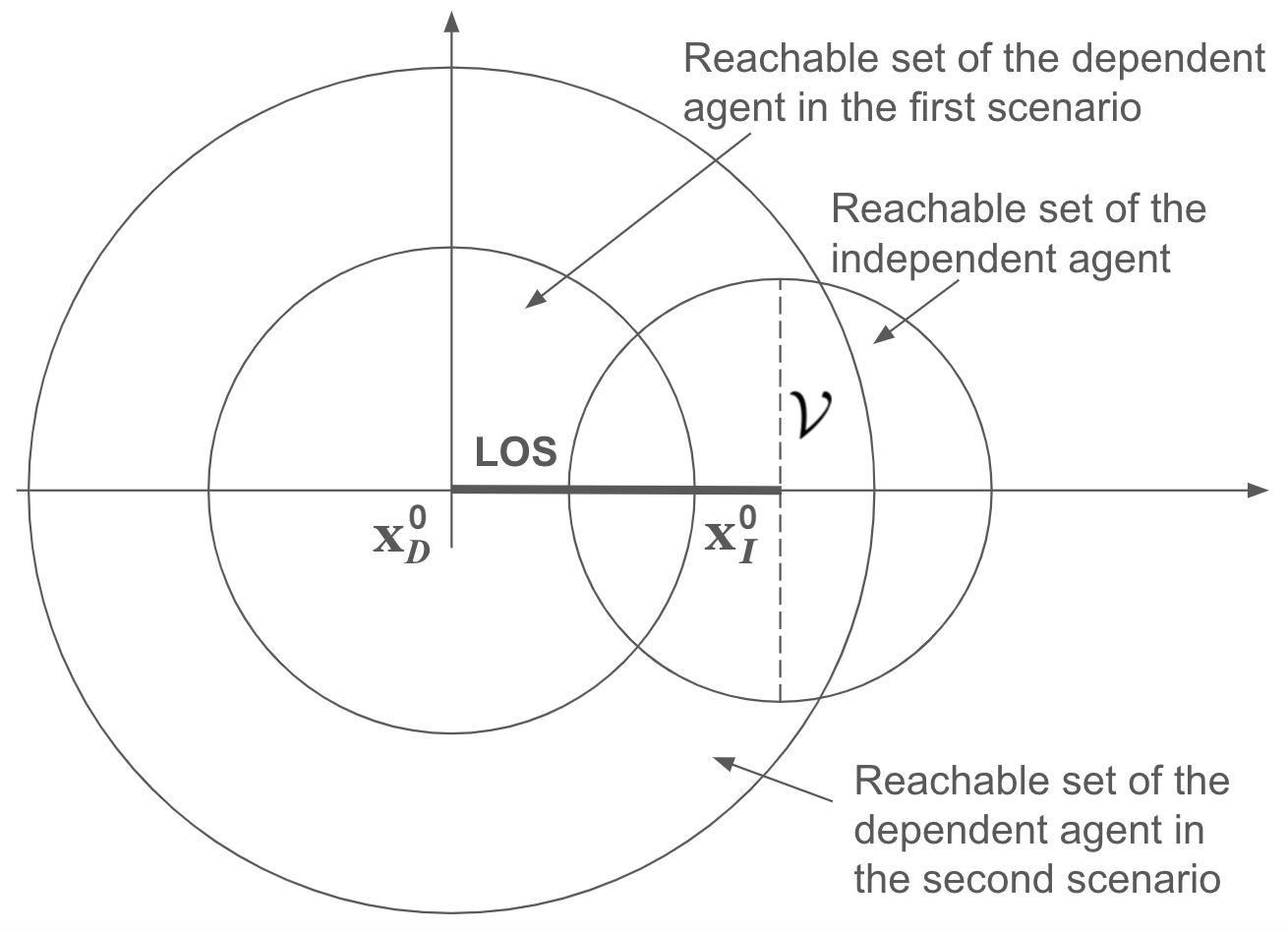}
\caption{Schematics of the two distinct scenarios that are considered to characterize the dependent reachable set for the constant bearing pursuit strategy}
\label{fig:main_instances}
\end{figure}

The second scenario in Figure \ref{fig:main_instances} depicts the case where the circle $\partial \mathcal{R}_D(t)$ fully encompasses the line segment $\mathcal{V}$ i.e. $\mathcal{V} \subseteq \mathcal{R}_D(t)$. 
The instances corresponding to the second scenario occur in the time interval $t_2 \leq t \leq t_c$.
The following lemmas help characterize the DRS for the constant bearing pursuit strategy.

\begin{lemma}\label{lemma:z2_limit}
    For $0 \leq t \leq t_c$, $\mathcal{D}(t) \subseteq \Big\lbrace\mathbf{x} = [x,y]^\top\in\mathbb{R}^2:\mathrm{abs}(y) \leq v_It\Big\rbrace$, where $\mathrm{abs}(.)$ denotes the absolute value.
\end{lemma}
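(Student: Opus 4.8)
The plan is to show that, under the constant bearing strategy, the dependent agent always shares its $y$-coordinate with the independent agent, and then to invoke the bound on $y_I(t)$ that follows from Lemma~\ref{lemma:2D_reach}.

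First I would fix an arbitrary admissible $\hat{\mathbf{u}}_I \in \mathcal{U}_I$ with associated heading $\psi_I(\cdot)$, and let $\psi_D(\cdot)$ be given by (\ref{eq:const_bear}), which is well defined under Assumption~\ref{assume:speed}. The $y$-component of the dependent agent's position at time $t$ is
\begin{align}
    y_D(t) = \int_0^t v_D \sin\psi_D(\tau)\,\mathrm{d}\tau .
\end{align}
By (\ref{eq:const_bear}) we have $\sin\psi_D(\tau) = \tfrac{v_I}{v_D}\sin\psi_I(\tau)$, hence $v_D\sin\psi_D(\tau) = v_I\sin\psi_I(\tau)$ pointwise in $\tau$, and since $y_I^0 = 0$,
\begin{align}
    y_D(t) = \int_0^t v_I\sin\psi_I(\tau)\,\mathrm{d}\tau = y_I(t).
\end{align}
This is the analytic counterpart of the geometric fact that the constant bearing strategy keeps the LOS from rotating: being initially parallel to the horizontal axis, it stays so, so the two agents have equal ordinates at every instant. (Equivalently, the $y$-component of the relative velocity $v_I\sin\psi_I - v_D\sin\psi_D$ vanishes identically.)

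It then remains only to bound $|y_I(t)|$. By Lemma~\ref{lemma:2D_reach}, $\mathbf{x}_I(t) \in \mathcal{R}_I(t)$, the disk centered at $(a,0)$ of radius $v_I t$; in particular $\mathrm{abs}(y_I(t)) \le v_I t$. (Directly, $\mathrm{abs}(y_I(t)) \le \int_0^t v_I \mathrm{abs}(\sin\psi_I(\tau))\,\mathrm{d}\tau \le v_I t$.) Combining with $y_D(t) = y_I(t)$ yields $\mathrm{abs}(y_D(t)) \le v_I t$, and since $\hat{\mathbf{u}}_I$ was arbitrary, every point of $\mathcal{D}(t)$ lies in the strip $\{[x,y]^\top : \mathrm{abs}(y)\le v_I t\}$. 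I do not anticipate a genuine obstacle here; the only mild points to be careful about are recording that $v_D\sin\psi_D = v_I\sin\psi_I$ holds identically (immediate from (\ref{eq:const_bear})) and that the restriction $t\le t_c$ is harmless, since the identity and the bound are valid at every time for which the pursuit is still active.
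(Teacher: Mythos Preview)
Your proof is correct and follows essentially the same approach as the paper: show $y_D(t)=y_I(t)$ from the constant bearing relation and then bound $|y_I(t)|\le v_I t$. The only difference is cosmetic—you spell out the identity $v_D\sin\psi_D=v_I\sin\psi_I$ analytically, whereas the paper phrases it geometrically (``the LOS does not rotate'').
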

\begin{proof}
    For the constant bearing pursuit strategy, the LOS, which is initially aligned along the horizontal axis, does not rotate. 
    As a result, $y_D(t) = y_I(t)$ for all time $t \in [0,t_c]$, and the vertical coordinate of the dependent agent is bounded by $\pm v_It$.
    Therefore,
    \begin{align}
        \mathcal{D}(t) \subseteq \Big\lbrace\mathbf{x} = [x,y]^\top\in\mathbb{R}^2:\mathrm{abs}(y) \leq v_It\Big\rbrace. \label{eq:z2_limit}
    \end{align}
\end{proof}

\begin{lemma}\label{lemma:z1_limit}
    For $0 \leq t \leq t_c$, $\mathcal{D}(t) \subseteq \Big\lbrace\mathbf{x} = [x,y]^\top\in\mathbb{R}^2:x \geq t\sqrt{v_D^2 - v_I^2}\Big\rbrace$.
\end{lemma}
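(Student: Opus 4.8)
The plan is to bound the horizontal component of the dependent agent's velocity from below by the constant $\sqrt{v_D^2 - v_I^2}$, and then integrate. First I would invoke the constant bearing relation (\ref{eq:const_bear}) to write $\sin\psi_D(t) = (v_I/v_D)\sin\psi_I(t)$, and recall from Section \ref{sec:prob} that under Assumption \ref{assume:speed} the heading angle satisfies $\psi_D(t) \in [-\pi/2,\pi/2]$, so that $\cos\psi_D(t) \geq 0$ and hence $\cos\psi_D(t) = \sqrt{1 - \sin^2\psi_D(t)}$.

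Next I would compute the $x$-component of $\mathbf{u}_D(t)$ explicitly:
\begin{align}
    \dot{x}_D(t) = v_D\cos\psi_D(t) = v_D\sqrt{1 - \dfrac{v_I^2}{v_D^2}\sin^2\psi_I(t)} = \sqrt{v_D^2 - v_I^2\sin^2\psi_I(t)}.
\end{align}
Because $\sin^2\psi_I(t) \leq 1$ for every admissible control $\mathbf{u}_I \in \mathcal{U}_I$ and the square root is monotone, this yields the pointwise lower bound $\dot{x}_D(t) \geq \sqrt{v_D^2 - v_I^2}$, uniformly in $t$ and independent of the independent agent's control choice.

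Finally, integrating from $0$ to $t$ and using that $\mathbf{x}_D^0$ is the origin, so $x_D(0) = 0$, gives
\begin{align}
    x_D(t) = \int_0^t \sqrt{v_D^2 - v_I^2\sin^2\psi_I(\tau)}\,\mathrm{d}\tau \geq \int_0^t \sqrt{v_D^2 - v_I^2}\,\mathrm{d}\tau = t\sqrt{v_D^2 - v_I^2}.
\end{align}
Since this holds along every feasible trajectory of the independent agent, every point of $\mathcal{D}(t)$ has its $x$-coordinate at least $t\sqrt{v_D^2 - v_I^2}$, which is the claim. I do not anticipate a genuine obstacle here; the only step requiring care is choosing the nonnegative branch of the square root when passing from $\cos\psi_D(t)$ to $\sqrt{1 - \sin^2\psi_D(t)}$, and this is justified by $\psi_D(t) \in [-\pi/2,\pi/2]$, which follows from (\ref{eq:const_bear}) together with Assumption \ref{assume:speed}.
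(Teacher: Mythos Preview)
Your proof is correct and follows essentially the same approach as the paper: both bound the horizontal velocity component $v_D\cos\psi_D(t)$ from below by $\sqrt{v_D^2 - v_I^2}$ using the constant bearing relation and the fact that $\psi_D(t)\in[-\pi/2,\pi/2]$, and then integrate to obtain the lower bound on $x_D(t)$. Your version is slightly more explicit in writing out $\dot{x}_D(t)=\sqrt{v_D^2 - v_I^2\sin^2\psi_I(t)}$ and carrying through the integral, but the argument is the same.
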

\begin{proof}
    For this proof, the minimum horizontal speed achievable by the dependent agent with the constant bearing pursuit strategy must be identified. 
    To this end, since the dependent agent's heading ($\psi_D(t)$) is a function of the independent agent's heading ($\psi_I(t)$) per (\ref{eq:const_bear}), the $\psi_I(t)$ corresponding to the minimum horizontal speed can be obtained as
    \begin{align}
        \underset{\psi_I(t) \in (-\pi,\pi]}{\arg\min}~\cos\psi_D(t) = \pm \dfrac{\pi}{2}.
    \end{align}
    Note that $\psi_I(t) = \pm \pi/2$ indicates that the independent agent moves perpendicular to the LOS.
    Consequently, at time $t$, the minimum horizontal speed of the dependent agent is 
    \begin{align}
        \underset{\psi_I(t) \in (-\pi,\pi]}{\min}~v_D\cos\psi_D(t) &= v_D \cos \Big( \sin^{-1} \left(\dfrac{v_I}{v_D}\right)\Big) \nonumber \\
        &= \sqrt{v_D^2- v_I^2}.
    \end{align}
    As a result, the minimum horizontal distance that the dependent agent can travel within time $t$ is $t\sqrt{v_D^2- v_I^2}$.
    Therefore,
    \begin{align}
        \mathcal{D}(t) \subseteq \Big\lbrace\mathbf{x} = [x,y]^\top\in\mathbb{R}^2:x \geq t\sqrt{v_D^2 - v_I^2}\Big\rbrace. \label{eq:z1_limit}
    \end{align}
\end{proof}

From Lemmas \ref{lemma:z2_limit} and \ref{lemma:z1_limit}, it can be deduced that $\mathcal{D}(t)$ will lie in the intersection of the regions given in (\ref{eq:drs_reach_subset}), (\ref{eq:z2_limit}), and (\ref{eq:z1_limit}).
The intersection region for the first scenario ($0\leq t\leq t_2$) is shown in Figure \ref{fig:DRS_shape}.
The boundaries of these three regions intersect at the points $P_1$ and $P_2$, as shown in Figure \ref{fig:DRS_shape}.
The coordinates of the points $P_1$ and $P_2$ can be obtained as $\left(t\sqrt{v_D^2 - v_I^2},v_It\right)$ and $\left(t\sqrt{v_D^2 - v_I^2},-v_It\right)$, respectively.
From the aforementioned coordinates, it can be seen that the points $P_1$ and $P_2$ lie on the circle $\partial\mathcal{R}_D(t)$.
Note that at time $t_2$, the line segment joining the points $P_1$ and $P_2$ coincides with $\mathcal{V}$, the vertical diameter of the circle $\partial \mathcal{R}_I(t)$.
In Figure \ref{fig:DRS_shape}, $P_x = [v_D t, 0]^\top$.
The discussion of the Apollonius circle (shown in Figure \ref{fig:DRS_shape}) and its link to the DRS is included after Theorem \ref{thm:main}.
The following theorem establishes that the DRS for the first scenario is the intersection region itself.

\begin{figure}[htb!]
\centering
\includegraphics[width = 0.65\textwidth]{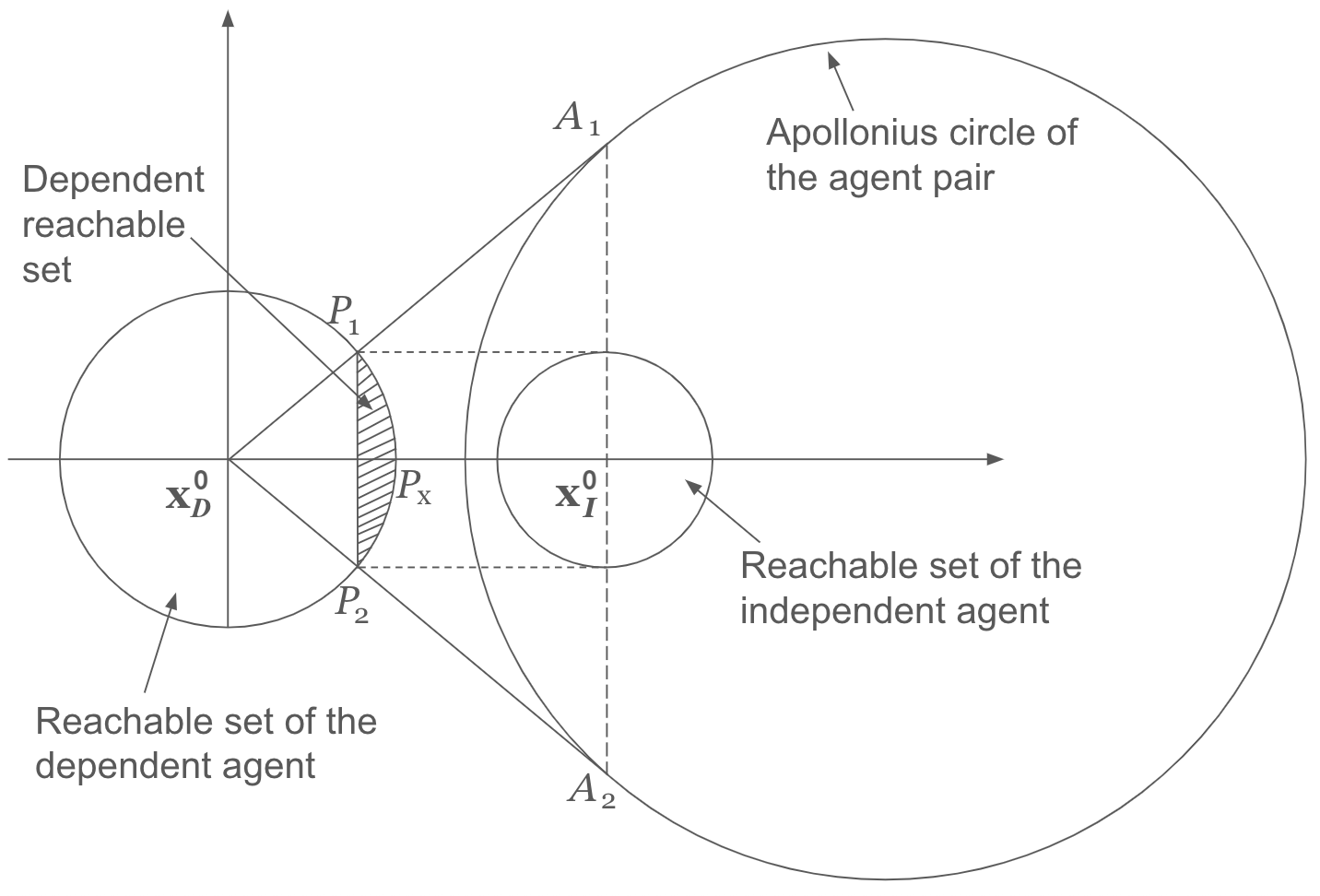}
\caption{Characterization of the DRS for $0\leq t \leq t_2$}
\label{fig:DRS_shape}
\end{figure}

\begin{theorem}\label{thm:main}
For $0 \leq t \leq t_2$, 
\begin{align}
    \mathcal{D}(t) = \Big\lbrace\mathbf{x}= [x,y]^\top\in\mathbb{R}^2: \|\mathbf{x}\|_2 \leq v_Dt \text{ and } x \geq t\sqrt{v_D^2 - v_I^2}\Big\rbrace.\label{eq:drs_thm}
\end{align}
\end{theorem}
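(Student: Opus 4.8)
The plan is to prove the set equality by two inclusions, with essentially all the work in the reverse one. One inclusion is immediate: by (\ref{eq:drs_reach_subset}), $\mathcal{D}(t)\subseteq\mathcal{R}_D(t)=\{\mathbf{x}:\|\mathbf{x}\|_2\le v_Dt\}$, and by Lemma \ref{lemma:z1_limit}, $\mathcal{D}(t)\subseteq\{[x,y]^\top:x\ge t\sqrt{v_D^2-v_I^2}\}$; intersecting these gives ``$\subseteq$'' in (\ref{eq:drs_thm}), and Lemma \ref{lemma:z2_limit} is then automatic since on that set $y^2\le v_D^2t^2-x^2\le v_D^2t^2-t^2(v_D^2-v_I^2)=v_I^2t^2$. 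This direction does not use $t\le t_2$.

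For the reverse inclusion I would first reduce the independent agent's control to a single scalar. Writing $w(\tau)=\sin\psi_I(\tau)\in[-1,1]$ and choosing the representative $\psi_I(\tau)=\sin^{-1}w(\tau)\in[-\pi/2,\pi/2]$ — this leaves $\mathbf{x}_D(\cdot)$ unchanged and keeps the independent agent from ever moving in the $-x$ direction, which can only help avoid capture — equation (\ref{eq:const_bear}) yields $\dot{x}_D=\sqrt{v_D^2-v_I^2w^2}$ and $\dot{y}_D=v_Iw$. Now fix a target $\mathbf{x}^\ast=[x^\ast,y^\ast]^\top$ in the right-hand set of (\ref{eq:drs_thm}); its constraints force $|y^\ast|\le v_It$ and $x^\ast\in\big[t\sqrt{v_D^2-v_I^2},\,\sqrt{v_D^2t^2-(y^\ast)^2}\big]$. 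Set $\bar{w}=y^\ast/(v_It)\in[-1,1]$ and, taking $\bar{w}\ge0$ without loss of generality, use the two-level control $w\equiv\beta$ on $[0,t_+]$ and $w\equiv-\beta$ on $[t_+,t]$ with parameter $\beta\in[\bar{w},1]$ and $t_\pm=\tfrac{t}{2}(1\pm\bar{w}/\beta)$. Direct integration shows $y_D(t)=v_I\bar{w}t=y^\ast$ for every $\beta$, while $x_D(t)=t\sqrt{v_D^2-v_I^2\beta^2}$, which varies continuously and decreases from $\sqrt{v_D^2t^2-(y^\ast)^2}$ at $\beta=\bar{w}$ down to $t\sqrt{v_D^2-v_I^2}$ at $\beta=1$; the intermediate value theorem then picks a $\beta$ with $x_D(t)=x^\ast$.

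It remains to verify that this trajectory is still in active pursuit at time $t$, and this is where $t\le t_2$ enters. Since the bearing is held constant, the line of sight stays on the $x$-axis, so the relative distance is $r(\tau)=x_I(\tau)-x_D(\tau)$; for the control above both $\dot{x}_I=v_I\sqrt{1-\beta^2}$ and $\dot{x}_D=\sqrt{v_D^2-v_I^2\beta^2}$ are constant, giving $r(\tau)=a-\tau\,h(\beta)$ with $h(\beta)=\sqrt{v_D^2-v_I^2\beta^2}-v_I\sqrt{1-\beta^2}$. A short computation (monotonicity of $h$ on $[0,1]$, or squaring the inequality) shows $h(\beta)\le h(1)=\sqrt{v_D^2-v_I^2}$, precisely because $v_D>v_I$, so for $\tau\in[0,t)$ with $t\le t_2=a/\sqrt{v_D^2-v_I^2}$ one gets $r(\tau)=a-\tau h(\beta)\ge a-\tau\sqrt{v_D^2-v_I^2}>a-t\sqrt{v_D^2-v_I^2}\ge 0$; capture has not occurred before $t$, hence $\mathbf{x}^\ast\in\mathcal{D}(t)$.

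The main obstacle is this last active-pursuit check: one must confirm that the constructed evasion does not complete capture before time $t$, and the estimate $h(\beta)\le\sqrt{v_D^2-v_I^2}$ combined with $t\le t_2$ is exactly what makes it go through — and makes it fail for $t>t_2$ already at the corner control $\beta=1$ (which is forced when $y^\ast=\pm v_It$, i.e. at the points $P_1$, $P_2$), explaining why the theorem is stated only up to $t_2$. The remaining items — the monotonicity/intermediate-value bookkeeping, the degenerate cases $\bar{w}\in\{0,1\}$, and the symmetric case $\bar{w}<0$ — are routine.
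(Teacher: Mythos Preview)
Your proof is correct, and the overall architecture---explicit switching control for the independent agent to hit each target point, followed by an active-pursuit check---matches the paper. The constructions differ, however, and yours is the simpler one. The paper uses a three-piece control $\mathbf{u}_{2s}$: heading $\theta$ on $[0,t_s]$, then $\pi/2$ and $-\pi/2$ on the two remaining equal subintervals, parametrized by $(\theta,t_s)$; the reachable $x_D$-coordinate for a fixed $y$ is then the expression $t_s\sqrt{v_D^2-v_I^2\sin^2\theta}+(t-t_s)\sqrt{v_D^2-v_I^2}$, which is argued to sweep the required interval as $t_s$ varies. Your two-level control keeps $|\sin\psi_I|=\beta$ throughout, so both $\dot x_D$ and $\dot x_I$ are constant on $[0,t]$; the one-parameter sweep $x_D(t)=t\sqrt{v_D^2-v_I^2\beta^2}$ is then monotone in $\beta$ and the intermediate value theorem finishes cleanly. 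The payoff shows up most clearly in the capture check: because your closing rate $\dot r=-h(\beta)$ is constant, the whole thing collapses to the monotonicity estimate $h(\beta)\le h(1)=\sqrt{v_D^2-v_I^2}$, whereas the paper solves $x_D(t)=x_I(t)$ for the capture time and then bounds the resulting expression in $\theta$ (implicitly using that $r$ is monotone to justify checking only the terminal time). Both arguments go through; yours has fewer moving parts.
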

\begin{proof}
The expressions in (\ref{eq:drs_reach_subset}) and (\ref{eq:z1_limit}) establish that the sets $\|\mathbf{x}\|_2 \leq v_Dt$ and $x \geq t\sqrt{v_D^2 - v_I^2}$ bound the DRS for all time $t\in [0,t_c]$.
The proof involves first showing that for every point $\mathbf{x}$ from the intersection of the aforementioned sets, there exists a control function $\mathbf{u}_I(.) \in \mathcal{U}_I$ of the independent agent which drives the dependent agent following the constant bearing pursuit strategy to the point $\mathbf{x}$.
In this regard, for time $t \in (0, t_2]$ and $\theta \in \left[-\dfrac{\pi}{2}, \dfrac{\pi}{2}\right]$, consider the candidate switching control function of the type 
\begin{align}
\mathbf{u}_{2s}(\tau) = \begin{cases}
[v_I\cos\theta,v_I\sin\theta]^\top, \quad &\text{if } 0\leq \tau < t_s,\\
[0,v_I]^\top, \quad &\text{if } t_s \leq \tau < \dfrac{t+t_s}{2},\\
[0,-v_I]^\top, \quad &\text{if } \dfrac{t+t_s}{2} \leq \tau \leq t,
\end{cases}, \label{eq:switch_func_2D_2}
\end{align}
for the independent agent.
Since the independent agent starts at $(a,0)$, for the control function in (\ref{eq:switch_func_2D_2}), the independent agent reaches the point $\left(a+t_s v_I\cos\theta, t_s v_I\sin\theta\right)$ at time $t_s$, then travels vertically up until time $\dfrac{t+t_s}{2}$, and finally returns to the point $\left(a+t_s v_I\cos\theta, t_s v_I\sin\theta\right)$ at time $t$.
Figure \ref{fig:thm_2s_proof} indicates the resulting path (lines with arrows that depict the direction of motion) of the independent agent for the control function in (\ref{eq:switch_func_2D_2}). 

For the control function in (\ref{eq:switch_func_2D_2}), $\psi_D(t)$ for time $t \in [0,t_s)$ can be obtained as $\sin^{-1}\left(\dfrac{v_I \sin\theta}{v_D}\right)$ per (\ref{eq:const_bear}).
As a result, the dependent agent reaches the point $\left(t_s\sqrt{v_D^2-v_I^2\sin^2\theta},t_s v_I\sin\theta\right)$ at time $t_s$.
Note that the dependent agent starts from the origin.
As discussed in the proof of Lemma \ref{lemma:z2_limit}, the horizontal speed of the dependent agent for the time when the independent agent moves perpendicular to the LOS (horizontal axis) is $\sqrt{v_D^2 - v_I^2}$.
Therefore, at time $t$, the dependent agent reaches the horizontal coordinate 
\begin{align}
    t_s\sqrt{v_D^2 - v_I^2\sin^2\theta} + (t-t_s)\sqrt{v_D^2 - v_I^2}.\label{eq:xd_second_thm}
\end{align} 
Note that the corresponding vertical coordinate is $t_s v_I\sin\theta$, which is the same as that of the independent agent.

Given $y=t_s v_I\sin\theta$, the dependent agent's horizontal coordinate in (\ref{eq:xd_second_thm}) can be rewritten as
\begin{align}
\sqrt{t^2_sv_D^2 - y^2} + (t-t_s)\sqrt{v_D^2 - v_I^2}. \label{eq:xd_second_thm_2}
\end{align}
From (\ref{eq:xd_second_thm_2}), it can be observed that all the points from the set in (\ref{eq:drs_thm}) are spanned using $t_s \in [y/v_I, t]$, given $y \in [-v_It, v_It]$.
Each point from the set in (\ref{eq:drs_thm}) corresponds to the point reached by the independent agent using the control function in (\ref{eq:switch_func_2D_2}) for $\pi/2 \leq \theta \leq \theta_l$ (when $y \geq 0$) or $\theta_l \leq \theta \leq -\pi/2$ (when $y < 0$), where $\theta_l = \sin^{-1}(y/v_It)$.
   
\begin{figure}[htb!]
\centering
\includegraphics[width = 0.6\textwidth]{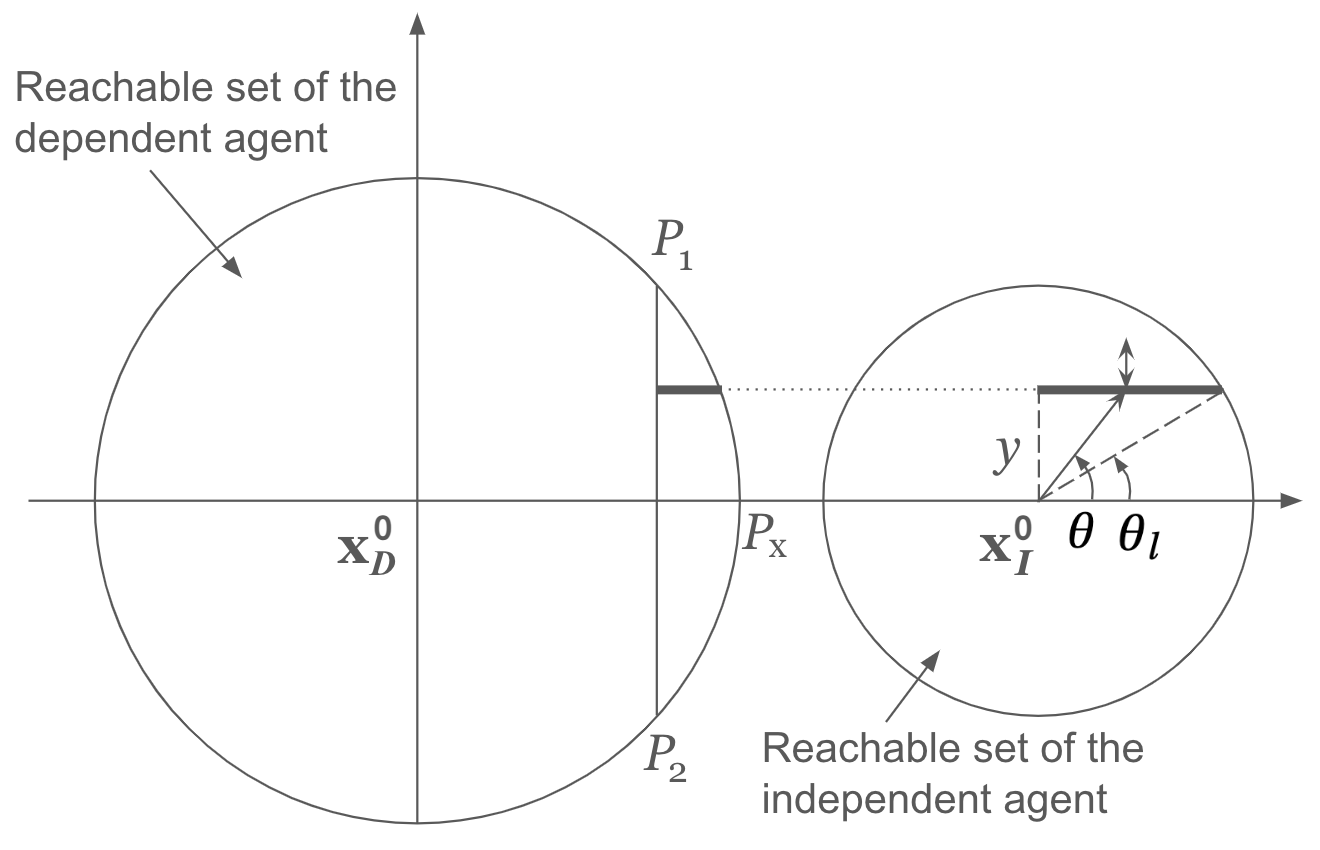}
\caption{Independent agent's path for the control function in (\ref{eq:switch_func_2D_2}), and the resulting points of both agents for a given vertical coordinate $y$. }
\label{fig:thm_2s_proof}
\end{figure}

Next, we need to show that all the resulting trajectories of the dependent agent corresponding to the control function in (\ref{eq:switch_func_2D_2}), and $\theta \in \left[-\dfrac{\pi}{2}, \dfrac{\pi}{2}\right]$ will not result in capture for $0 \leq t < t_2$.
For a capture to occur at time $t$ with the independent agent's control function in (\ref{eq:switch_func_2D_2}), both agents should reach the same horizontal coordinate at time $t$.
In this regard, for the capture at $t$,
\begin{align}
&t_s\sqrt{v_D^2 - v_I^2\sin^2\theta} + (t-t_s)\sqrt{v_D^2 - v_I^2} = a + v_It_s\cos\theta, \nonumber\\
\implies &t = \dfrac{a}{\sqrt{v_D^2 - v_I^2}} + \dfrac{t_s}{\sqrt{v_D^2 - v_I^2}}\Bigg[ v_I\cos\theta + \sqrt{v_D^2 - v_I^2} - \sqrt{v_D^2 - v_I^2\sin^2\theta}\Bigg]\nonumber\\
&~~= \dfrac{a}{\sqrt{v_D^2 - v_I^2}} + \dfrac{t_s}{\sqrt{v_D^2 - v_I^2}}\Bigg[ v_I\cos\theta + \sqrt{v_D^2 - v_I^2} - \sqrt{v_D^2 - v_I^2 + v_I^2\cos^2\theta}\Bigg].\label{eq:thm_final}
\end{align}
Since $t_s \geq 0$, $v_D > v_I$ and $\theta \in \left[-\dfrac{\pi}{2}, \dfrac{\pi}{2}\right]$, (\ref{eq:thm_final}) yields the capture time $t \geq a/\left(\sqrt{v_D^2 - v_I^2}\right) = t_2$. 
Consequently, the dependent agent cannot capture the independent agent, which executes the control function in (\ref{eq:switch_func_2D_2}), using the constant bearing pursuit strategy before time $t_2$.
Therefore, the dependent agent can reach all the points from the set in (\ref{eq:drs_thm}) via active pursuit trajectories.
Hence proved.
\end{proof}

As discussed in \cite{makkapati2019optimal}, the constant bearing pursuit strategy and the Apollonius circle have close connections.
The Apollonius circle is defined for a pursuer-evader pair as the set of capture points when the pursuer follows the constant bearing pursuit strategy, and the evader chooses a constant heading angle. 
The geometric properties of the Apollonius circle can be found in a prior study \cite{ramana2017pursuit}. 
Given the initial positions of the dependent and independent agents as $(0,0)$ and $(a,0)$, the center and radius of the corresponding Apollonius circle $\mathcal{A}$ can be obtained as  $\left(av_D^2/(v^2_D - v_I^2),0\right)$ and $av_Dv_I/v_D^2 - v_I^2$, respectively, as shown in Figure \ref{fig:DRS_shape}.

\begin{lemma}
    For $0 \leq t \leq t_2$, the points $P_1$ and $P_2$ lie on either tangent lines of the Apollonius circle $\mathcal{A}$ emanating from the initial position of the dependent agent.
\end{lemma} 
\begin{proof}
The line segment joining the two tangent points $A_1$ and $A_2$, in Figure \ref{fig:DRS_shape}, was shown to pass through the initial position of the independent agent (see Lemma 3.2 in Ref. \cite{ramana2017pursuit}).
Furthermore, the line joining $A_1$ and $A_2$ subtends an angle of $2\sin^{-1}(v_I/v_D)$ at the initial position of the dependent agent, and the line is perpendicular to the initial LOS (see Lemma 3.1 in Ref. \cite{ramana2017pursuit}).
Given the coordinates of $P_1$ and $P_2$ as $\left(t\sqrt{v_D^2 - v_I^2},\pm v_It\right)$, the line joining the origin and $P_1$ (or $P_2$) is angled at $\sin^{-1}(v_I/v_D)$ from the horizontal axis.
Consequently, the line joining the points $P_1$ and $P_2$ subtends an angle of $2\sin^{-1}(v_I/v_D)$ at the initial position of the dependent agent, and the line is perpendicular to the LOS.
Therefore, the points $A_1$, $P_1$, and the initial position of the dependent agent are collinear.
Similarly, the points $A_2$, $P_2$, and the initial position of the dependent agent are collinear.
Hence proved.
\end{proof}

It can be noted that at time $t_2$, the initial position of the independent agent $\mathbf{x}^0_I=(a,0)$ lies on the line segment $\overline{P_1P_2}$ \cite{ramana2017pursuit}.
Therefore, at time $t_2$, $\overline{P_1P_2}$ coincides with the line segment $\overline{A_1A_2}$ at time $t_2$.

\begin{corollary}
For $t_2 < t \leq t_c$, $\mathcal{D}(t) \subseteq \Big\lbrace\mathbf{x}= [x,y]^\top\in\mathbb{R}^2: \|\mathbf{x}\|_2 \leq v_Dt \text{ and } x \geq t\sqrt{v_D^2 - v_I^2}\Big\rbrace$.   
\end{corollary}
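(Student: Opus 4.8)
The plan is to notice that the corollary is the ``easy half'' of Theorem \ref{thm:main} restated on a larger time window, and that both constraints defining the claimed superset were already established for \emph{every} $t\in[0,t_c]$, not merely for $t\le t_2$. First I would invoke (\ref{eq:drs_reach_subset}), which gives $\mathcal{D}(t)\subseteq\mathcal{R}_D(t)=\{\mathbf{x}\in\mathbb{R}^2:\|\mathbf{x}\|_2\le v_Dt\}$ for all $t\ge 0$, and in particular for $t_2<t\le t_c$. Next I would invoke Lemma \ref{lemma:z1_limit}, whose statement already covers the whole interval $0\le t\le t_c$ and yields $\mathcal{D}(t)\subseteq\{\mathbf{x}=[x,y]^\top\in\mathbb{R}^2:x\ge t\sqrt{v_D^2-v_I^2}\}$. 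Intersecting these two containments gives exactly $\mathcal{D}(t)\subseteq\{\mathbf{x}=[x,y]^\top:\|\mathbf{x}\|_2\le v_Dt \text{ and } x\ge t\sqrt{v_D^2-v_I^2}\}$, and since $(t_2,t_c]\subseteq[0,t_c]$ this is the asserted inclusion. No new construction is needed.

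It is worth flagging explicitly, inside the proof, why only a containment (and not the equality of Theorem \ref{thm:main}) is claimed on this second time window. In the proof of Theorem \ref{thm:main} the surjectivity half relied on the switching control $\mathbf{u}_{2s}$ of (\ref{eq:switch_func_2D_2}), which was shown via (\ref{eq:thm_final}) to produce capture only at times $t\ge t_2$; hence for $t<t_2$ every such trajectory is still an active pursuit trajectory and contributes its terminal point to $\mathcal{D}(t)$. Once $t>t_2$, some of these candidate trajectories have already captured the independent agent by time $t$ and must be removed from $\mathcal{D}(t)$, so the ``every point of the intersection region is reached'' step no longer goes through verbatim, and the exact shape of $\mathcal{D}(t)$ in the second scenario (where $\partial\mathcal{R}_D(t)$ encircles the vertical diameter of $\mathcal{R}_I(t)$) must be handled separately using Lemma \ref{lemma:z2_limit} and the vertical-diameter geometry of Fig.~\ref{fig:main_instances}.

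The main ``obstacle'' here is really just one of bookkeeping rather than mathematics: one must confirm that Lemma \ref{lemma:z1_limit} and the inclusion $\mathcal{D}(t)\subseteq\mathcal{R}_D(t)$ were both proved on $[0,t_c]$ and therefore transfer without modification, and one must resist the temptation to write equality. So I expect the write-up to be a short two- or three-sentence argument citing (\ref{eq:drs_reach_subset}) and Lemma \ref{lemma:z1_limit} and taking their intersection, followed by the remark above explaining the gap between this corollary and a full characterization of $\mathcal{D}(t)$ for $t_2<t\le t_c$.
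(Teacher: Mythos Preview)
Your proposal is correct and matches the paper's approach: the paper's proof is a single sentence, ``The proof directly follows from Lemmas \ref{lemma:z2_limit}, \ref{lemma:z1_limit},'' i.e., it simply invokes the containments already established on all of $[0,t_c]$ and intersects them. Your choice to cite (\ref{eq:drs_reach_subset}) together with Lemma \ref{lemma:z1_limit} is arguably even cleaner, since those are precisely the two constraints appearing in the corollary's superset; the additional commentary you give on why equality fails for $t>t_2$ is accurate and helpful but goes beyond what the paper records.
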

\begin{proof}
    The proof directly follows from Lemmas \ref{lemma:z2_limit}, \ref{lemma:z1_limit}.
\end{proof}

The above corollary provides a bound for the DRS in the second scenario ($t_2 < t \leq t_c$).
The bound is characterized by the points $P_1$ and $P_2$, and is depicted in Figure \ref{fig:scene2_drs}.
It is important to note that the above corollary, while providing a tight bound for the DRS, does not define the boundary of the DRS. 
The mathematical proof to establish the boundary of the DRS for $t_2 < t \leq t_c$ is elusive.
While a formal mathematical proof remains an open challenge, we present compelling evidence using simulation results (in the following subsection) that supports our hypothesis below.

\begin{hypothesis}
For $t_2 < t \leq t_c$,
\begin{align}
    \mathcal{D}(t) = \Bigg\lbrace\mathbf{x}= [x,y]^\top\in\mathbb{R}^2: \|\mathbf{x}\|_2 \leq v_Dt \text{ and } x \geq \dfrac{a^2 + v_D^2t^2 - v_I^2t^2}{2a}\Bigg\rbrace.\label{eq:drs_hypo}
\end{align}
\end{hypothesis}

The vertical line $x = (a^2 + v_D^2t^2 - v_I^2t^2)/2a$ contains the points $Q_1$ and $Q_2$, which are the intersection points for the boundaries of the reachable sets of both agents ($\partial \mathcal{R}_I$ and $\partial \mathcal{R}_D$), as shown in Figure \ref{fig:scene2_drs}.
In Section \ref{subsec:simres}, it is empirically shown that the DRS for time $t_2 < t \leq t_c$ is the shaded region in Figure \ref{fig:scene2_drs}, which is characterized by the minor segment $Q_1 P_x Q_2$.

\begin{figure}[htb!]
\centering
\includegraphics[width = 0.6\textwidth]{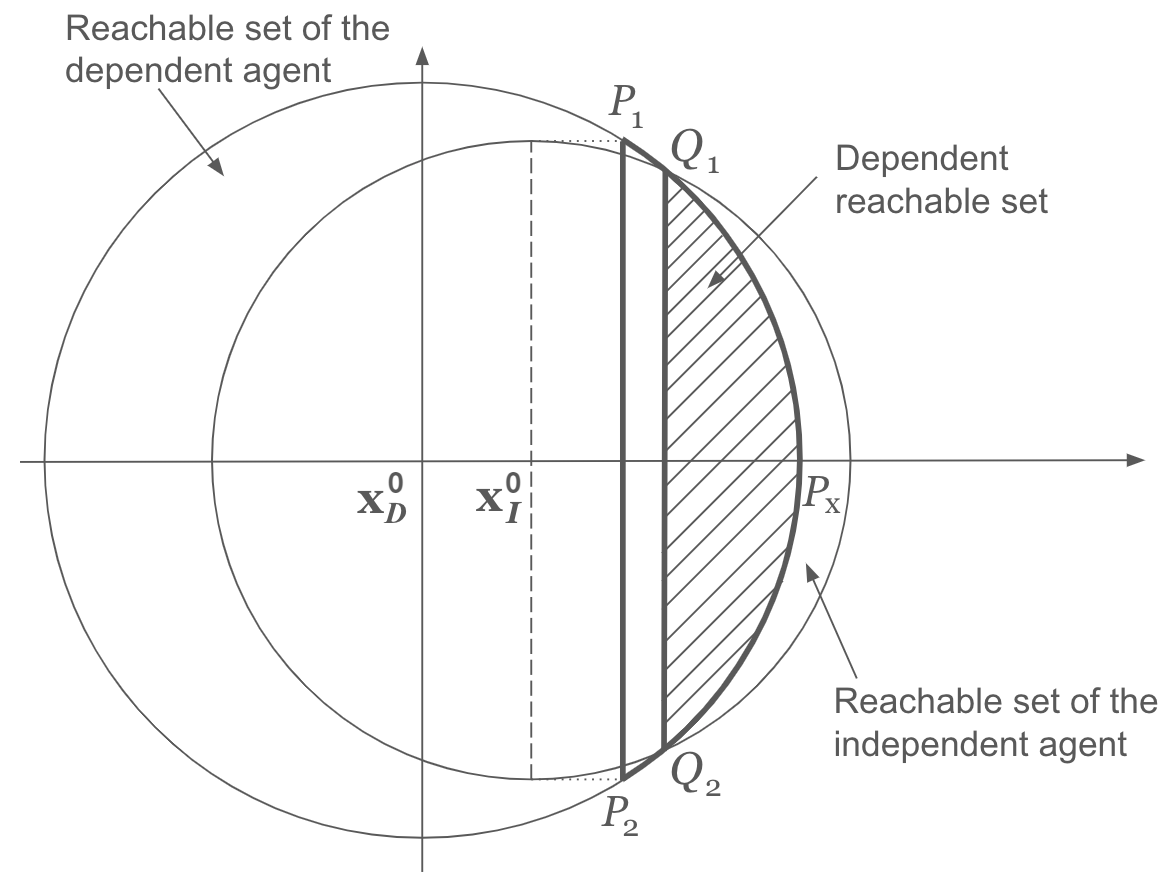}
\caption{Characterization of the DRS for $t_2 < t \leq t_c$}
\label{fig:scene2_drs}
\end{figure}

\subsection{The Limiting Case of $v_D = v_I$}\label{subsec:limit_case}

In the case of $v_D = v_I$, at time $t$, the points $P_1$ and $P_2$ are obtained as $(0,v_I t)$ and $(0, -v_I t)$, respectively.
Therefore, the points $P_1$, $P_2$, and the initial point of the dependent agent are collinear.
The line joining the three points forms the vertical diameter of the circle $\partial \mathcal{R}_D(t)$ (boundary of the dependent agent's reachable set), which can be visualized in Figure \ref{fig:equal_drs}

\begin{figure}[htb!]
\centering
\includegraphics[width = 0.6\textwidth]{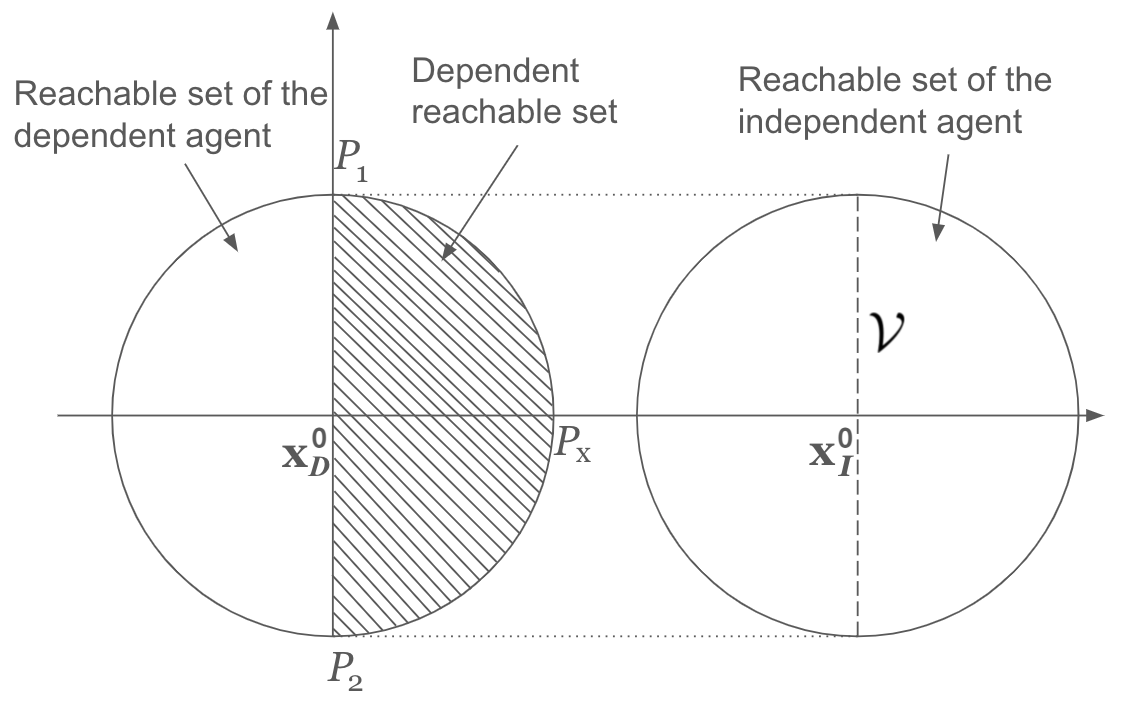}
\caption{DRS for the limiting case $v_D = v_I$}
\label{fig:equal_drs}
\end{figure}

Note that as $v_D \rightarrow v_I$, $t_2 \rightarrow \infty$.
Consequently, only the first scenario discussed in Section \ref{subsec:Theory} is relevant for the limiting case, and the second scenario is irrelevant.
Furthermore, it can be observed that the conclusion drawn in the proof of Theorem \ref{thm:main} is valid for the case of $v_D=v_I$.
Consequently, the DRS is given by
\begin{align}
    \mathcal{D}(t) = \Big\lbrace\mathbf{x}= [x,y]^\top\in\mathbb{R}^2: \|\mathbf{x}\|_2 \leq v_Dt \text{ and } x \geq 0\Big\rbrace.
\end{align}
The arc $\widearc{P_1P_2}$ that contains the point $P_x$, which is part of the boundary of the DRS, forms a semicircle.
The DRS for this case can be visualized in Figure \ref{fig:equal_drs}.

\subsection{Simulation Results}
\label{subsec:simres}
This subsection presents simulation results based on discrete-time point-cloud propagation to observe the evolution of the dependent reachable sets over time $0 \leq t \leq t_c$.
While HJ reachability-based numerical methods provide computational efficiency in estimating reachable sets, the provision to track active trajectories is unavailable. 
Point-cloud propagation is chosen to accurately estimate the DRS by visualizing the active trajectories and removing those trajectories that result in capture.
The initial positions of both agents are $x_D^0(0,0)$ and $x_I^0(1,0)$ and their speeds are set to $v_D = 1$ and $v_I = 0.5$ respectively. 
Simulations are carried out with the time-step $\Delta t = 0.2$ until all the independent agents are captured.
To generate a point cloud, we consider that at a given time instant, an independent agent chooses eighteen equally-spaced heading angles from $(-\pi,\pi]$.
Consequently, at the $n^{th}$ time step, $18^{n}$ points are generated for the independent agent. 
At any time instant, every independent agent point has an associated dependent agent point obtained via the constant bearing pursuit strategy.
For the simulation parameters mentioned above, $t_1 = 1.0$, $t_2 \approx 1.155$ and $t_c = 2.0 s$.

\begin{figure*}
    \subfigure[$t = 0.6  \leq t_1$]{\includegraphics[width = 0.5\textwidth]{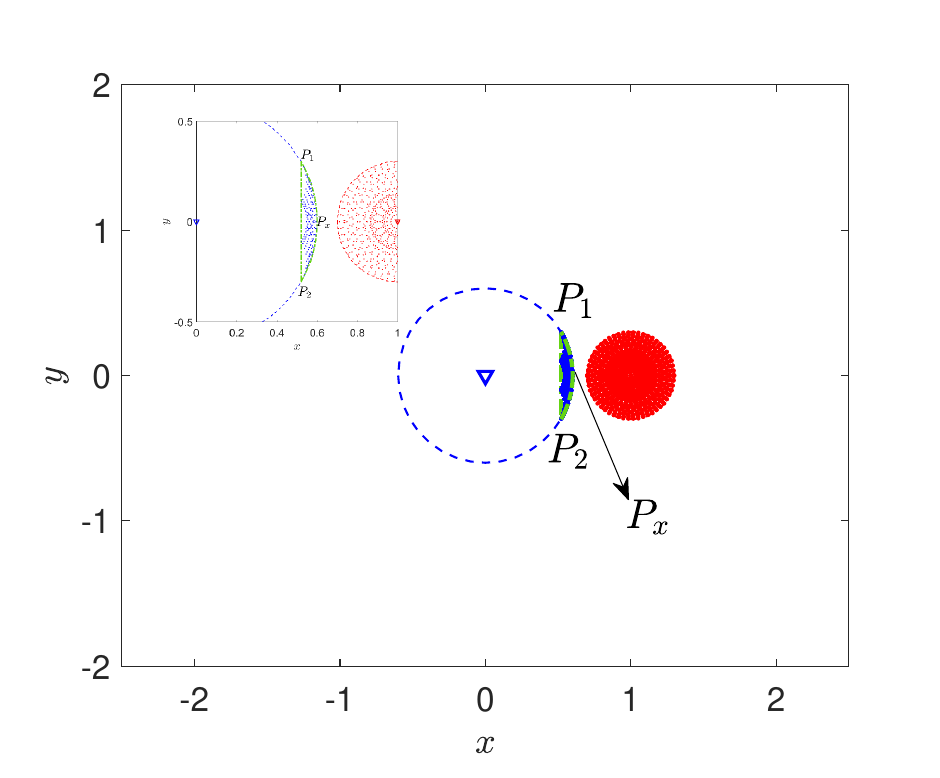}\label{fig:DRS_sim_a}} 
    \subfigure[$t = 1.0  = t_1$]{\includegraphics[width = 0.5\textwidth]{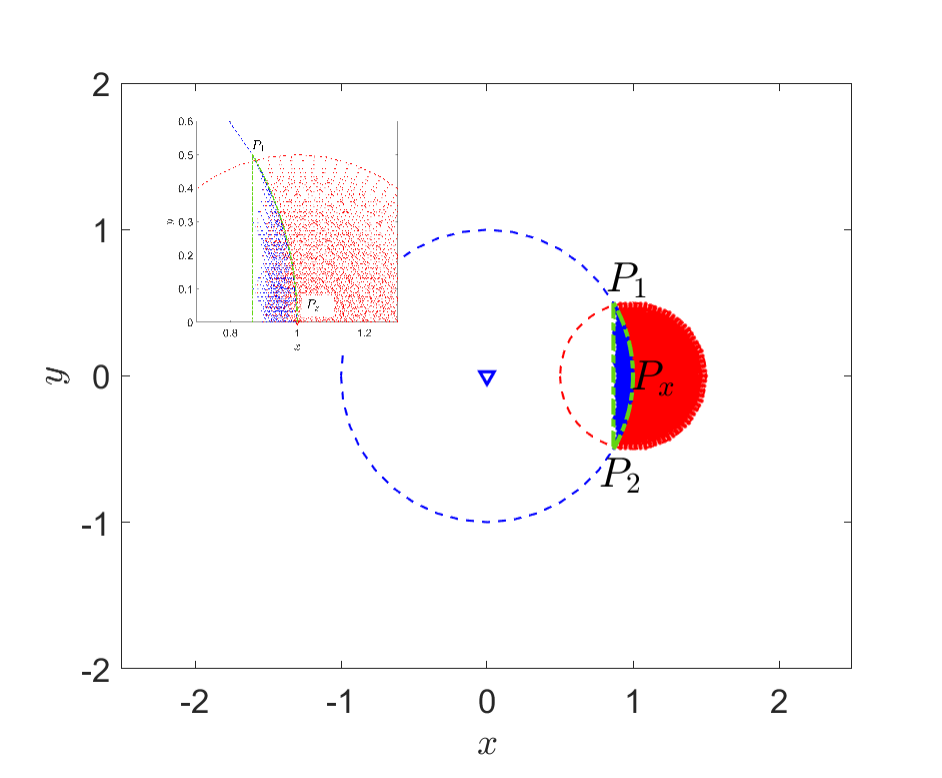}\label{fig:DRS_sim_b}}
    \subfigure[$t = 1.2 \approx t_2$]{\includegraphics[width = 0.5\textwidth]{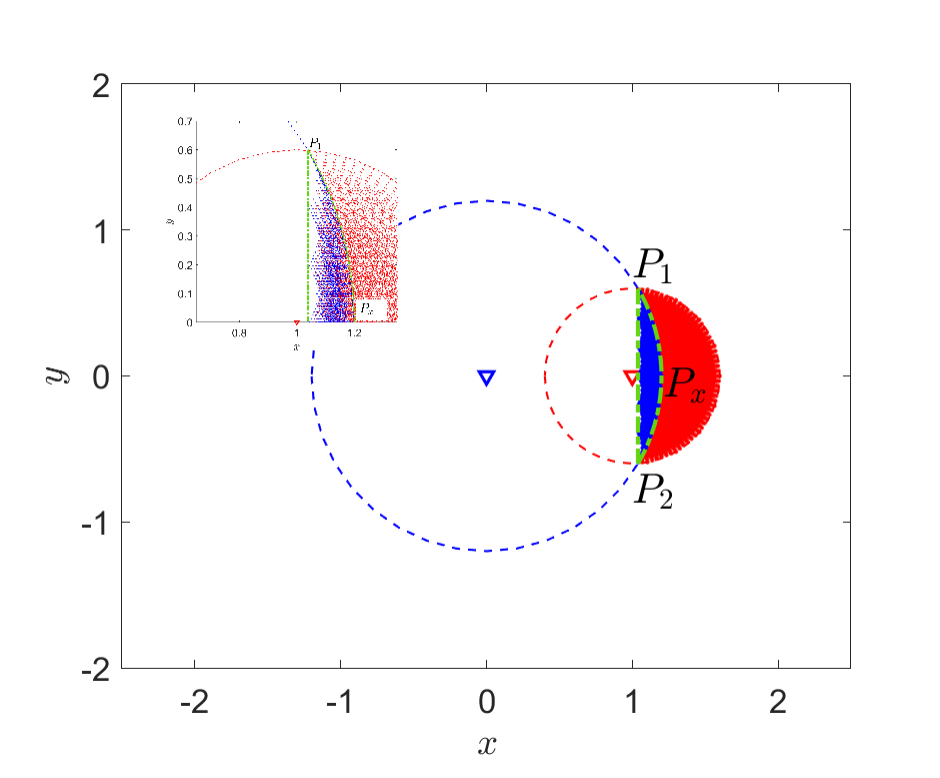}\label{fig:DRS_sim_c}} 
    \subfigure[$t_2 < t = 1.4 < t_c$]{\includegraphics[width = 0.5\textwidth]{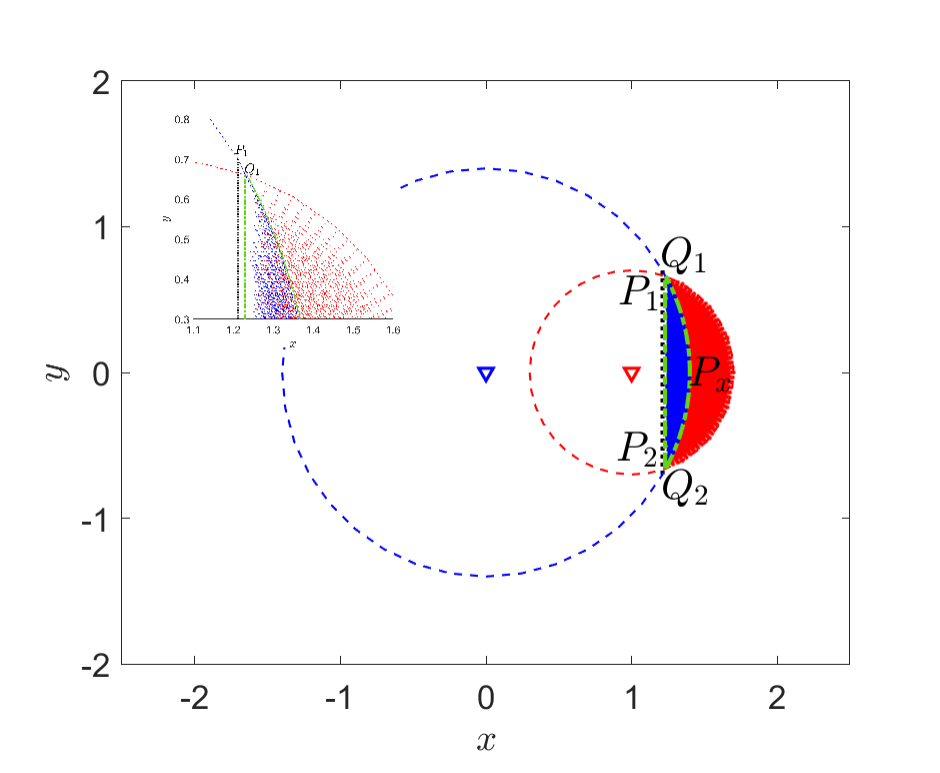}\label{fig:DRS_sim_d}}
    \subfigure[$t_ 2 < t = 1.8 < t_c$]{\includegraphics[width = 0.5\textwidth]{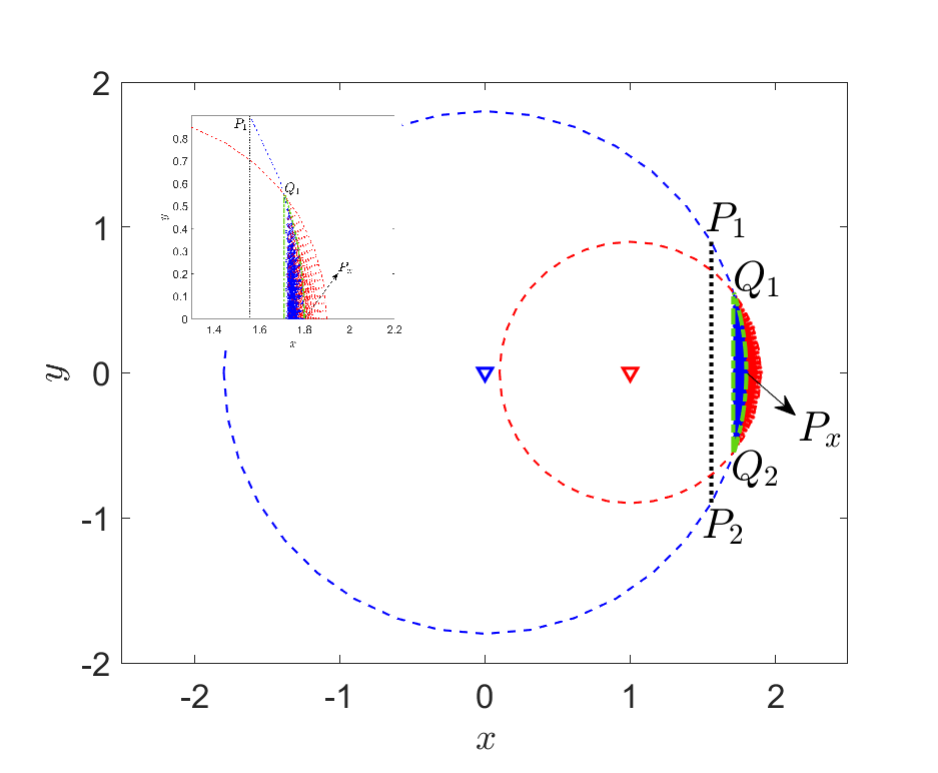}\label{fig:DRS_sim_e}}
    \subfigure[$t = 2.0 = t_c$]{\includegraphics[width = 0.5\textwidth]{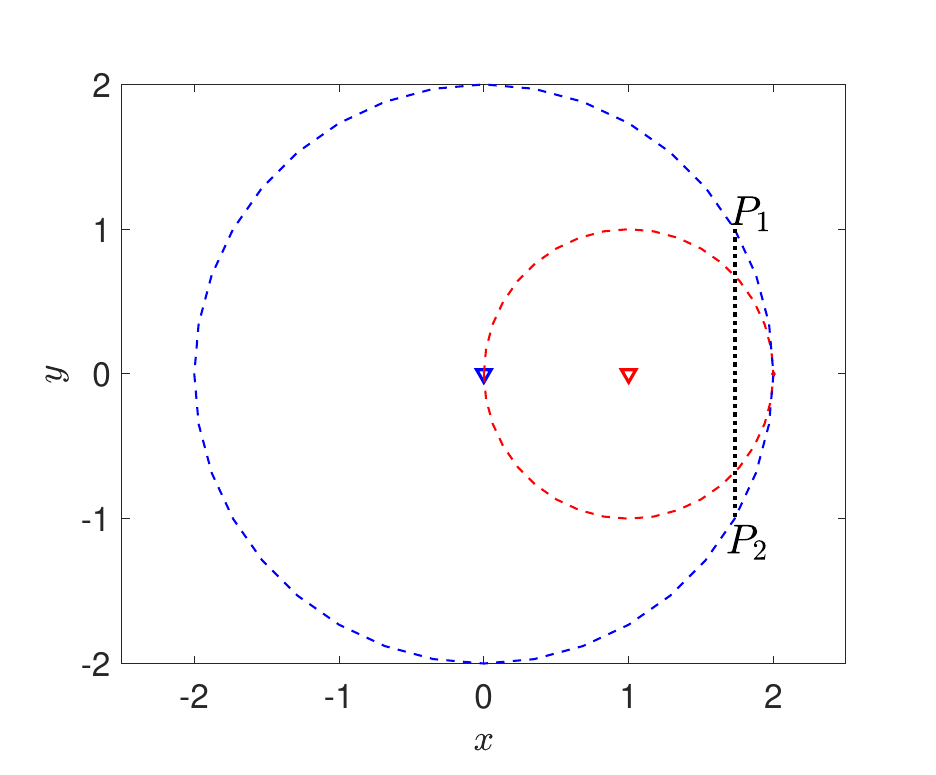}\label{fig:DRS_sim_f}}
    \caption{Point cloud-based simulation results depicting the evolution of DRS for $0 < t\leq t_c$}
    \label{fig:DRS_sim}
\end{figure*}

Figure \ref{fig:DRS_sim} presents the results obtained from the point cloud-based simulations. 
The blue color corresponds to the dependent agent, and the red color corresponds to the independent agent. 
The dashed circles indicate the boundaries of the agents' reachable sets. 
The green line indicates the boundary of the DRS.
The inverted triangles represent the initial positions of the two agents. 
The points in the red point cloud represent uncaptured independent agents, and
the points in the blue point cloud represent the active dependent agents that are yet to capture their corresponding independent agents (refer to Section \ref{sec:prob}). 
As the simulation progresses, when a dependent agent captures an independent agent, that particular pair is eliminated from their respective point clouds. 
Figures \ref{fig:DRS_sim_a} and \ref{fig:DRS_sim_b} represent two instances of the scenario $0 < t \leq t_1$, and Figure \ref{fig:DRS_sim_c} represents the case where $t \approx t_2$.
Figures \ref{fig:DRS_sim_d}-\ref{fig:DRS_sim_f} depict three instances of $t_2 < t \leq t_c$. 

As can be seen in the Figs. \ref{fig:DRS_sim_a} through \ref{fig:DRS_sim_f}, the line segment (chord, in geometric terms) $\overline{P_1P_2}$ bounds the DRS, which also forms part of the DRS boundary for $t \leq t_2$ (Figs. \ref{fig:DRS_sim_a} - \ref{fig:DRS_sim_c}).
Similarly, the minor arc $\widearc{P_1P_2}$ always bounds the DRS on the opposite side of the vertical chord $\overline{P_1P_2}$, and the minor segment $P_1P_xP_2$ represents the DRS for $t \leq t_2$. 
The length of $\overline{P_1P_2}$ is the same as the diameter of the reachable set of the independent agent. 
The active independent agents initially cover the entire independent agent's reachable set, as indicated in Figure \ref{fig:DRS_sim_a}. 
When the two reachable sets begin to intersect and overlap, some independent agents are captured, resulting in partial occupation of the independent agent's reachable set (see Figs. \ref{fig:DRS_sim_b}-\ref{fig:DRS_sim_e}). 

In Figs. \ref{fig:DRS_sim_d} and \ref{fig:DRS_sim_e}, it can be observed that the minor segment $P_1P_xP_2$ contains a region that is outside of the independent agent's reachable set $\mathcal{R}_I(t)$. 
Therefore, the points in the minor segment $P_1P_xP_2$ that are not part of the independent agent's reachable set cannot be part of the DRS.
Furthermore, there are no blue points in the region between the line segments $\overline{P_1P_2}$ and $\overline{Q_1Q_2}$.
It can be observed that the chord $\overline{Q_1Q_2}$ is part of the boundary of the blue point cloud when $t_2 < t \leq t_c$, and the minor segment $Q_1P_xQ_2$ contains all the blue points when $t_2 < t \leq t_c$. 
A more enhanced image of the interaction between the reachable sets is shown in the upper left corner of each figure.
Simulations have been conducted for different parameters ($v_D$, $v_I$, etc.), which have not been included for brevity, and the results provided the same observations.
Thus, it could be empirically stated that the DRS, initially represented by the segment $P_1P_xP_2$, reaches a maximum area at $t = t_2$ and subsequently shrinks to the segment $Q_1P_xQ_2$, which is contained within the segment $P_1P_xP_2$. 
Figure \ref{fig:DRS_sim_f} ($t = t_c$) presents the case where the dependent agent's reachable set fully encompasses the independent agent's reachable set, and all the points from both red and blue point clouds ended in capture.


\section{The Optimization Problem}
\label{sec:optimize}

In this section, an optimization problem is formulated to gain deeper insight into the geometry of the DRS.
A simple approach to determining the DRS at a time $t \geq 0$, $\mathcal{D}(t)$, consists of two steps. 
First, for each possible state $\mathbf{x}$ that the independent agent can reach at time $t$, compute the corresponding set of states $\mathcal{S}(\mathbf{x},t)$ that the dependent agent can reach with the constant bearing pursuit strategy. 
The complete DRS is then obtained as the union of all such sets of states of the dependent agent, i.e. 
\begin{align}
    \mathcal{D}(t) = \underset{\mathbf{x}\in\mathcal{R}_I(t)}{\bigcup}\mathcal{S}(\mathbf{x},t).
\end{align}
As stated earlier in Section \ref{sec:intro}, it is important to note that the independent agent may reach a particular state at time $t$ via multiple distinct trajectories. 
In such cases, the dependent agent can potentially reach multiple different states from the same initial condition with the constant bearing pursuit strategy.

Now, consider a point $\mathbf{x}_I(t) = \mathbf{x} = [x,y]^\top \in \mathcal{R}_I(t)$, where the independent agent reaches at time $0 \leq t \leq t_c$. 
Note that if $\|\mathbf{x}\|_2 < v_It$, there are infinitely many trajectories that the independent agent can take to reach the point $\mathbf{x}$.
The goal is to find all the points that the dependent agent, which follows the constant bearing pursuit strategy, can reach for the given $\mathbf{x}$.

As formulated in Section \ref{sec:prob}, since the dependent agent follows the constant bearing pursuit strategy and given the initial conditions of both agents (origin and $(a,0)$), the vertical coordinates of both agents are the same for all time $t \geq 0$.
Therefore, the vertical coordinate of the dependent agent at time $t$ is the same as the independent agent's vertical coordinate, which is $y$.
Subsequently, we intend to find the minimum and the maximum horizontal coordinates that the dependent agent can reach at the time instant $t$. 
Since $\dot{y}_D(\tau) = \dot{y}_I(\tau)$ for the constant bearing pursuit strategy, $\dot{x}_D(\tau) = \sqrt{v_D^2 - \dot{y}_I^2(\tau)}$.
In this regard, consider the problem of obtaining the extrema for the functional
\begin{align}
    \int_0^t\sqrt{v_D^2 - \dot{y}_I^2(\tau)}~ \mathrm{d} \tau, \label{eq:optim_cost}
\end{align}
for the constraints
\begin{align}
    \dot{x}_I^2(\tau)+\dot{y}_I^2(\tau) = v_I^2,\label{eq:cv_constraint}
\end{align}
\begin{align}
    \mathbf{x}_I(0) = [a,0]^\top, \quad \mathbf{x}_I(t) = \mathbf{x} = [x,y]^\top. \label{eq:optim_bounds}
\end{align}
The constraint in (\ref{eq:cv_constraint}) can be absorbed into the optimization by considering the Lagrangian
\begin{align}
    L = \sqrt{v_D^2 - \dot{y}_I^2(\tau)} + \lambda(\tau)\left[\dot{x}_I^2(\tau)+\dot{y}_I^2(\tau) - v_I^2\right],
\end{align}
where $\lambda(\tau)$ is the auxiliary variable.
The first-order necessary conditions for the extremum points can be obtained using the corresponding Euler-Lagrange equations given below.
\begin{align}
    &\dfrac{\partial L}{\partial x_I} - \dfrac{\mathrm{d}}{\mathrm{d}\tau}\left(\dfrac{\partial L}{\partial \dot{x}_I}\right) = -\dfrac{\mathrm{d}}{\mathrm{d}\tau}\Bigg(2\dot{x}_I(\tau)\lambda(\tau)\Bigg) = 0,\label{eq:EL_1}\\
    &\dfrac{\partial L}{\partial y_I} - \dfrac{\mathrm{d}}{\mathrm{d}\tau}\left(\dfrac{\partial L}{\partial \dot{y}_I}\right) = -\dfrac{\mathrm{d}}{\mathrm{d}\tau}\Bigg(-\dfrac{\dot{y}_I(\tau)}{\sqrt{v_D^2 - \dot{y}_I^2(\tau)}}+2\lambda(\tau)\dot{y}_I(\tau)\Bigg) = 0.\label{eq:EL_2}
\end{align}
Consequently,
\begin{align}
    \dot{x}_I(\tau)\lambda(\tau) = c_1, \label{eq:EL1}
\end{align}
\begin{align}
    \dot{y}_I(\tau)\left[1/(\sqrt{v_D^2 - \dot{y}_I^2(\tau)})-2\lambda(\tau)\right] = c_2, \label{eq:EL2}
\end{align}
for $0\leq \tau \leq t$, where $c_1$ and $c_2$ are constants.

Equations (\ref{eq:EL_1}) and (\ref{eq:EL_2}) cannot be solved further to obtain analytical expressions of the extrema. 
Therefore, simulation-based empirical studies were conducted to understand the nature of the extrema in the case where the independent agent follows a single-switching control function, which is the simplest candidate function, to reach the point $\mathbf{x}$.
Note that in such cases, the locus of points at which the independent agent switches between constant control inputs is the ellipse 
\begin{align}
    \mathcal{E} = \Big\lbrace\mathbf{x}_s = [x_s,y_s]^\top:\|\mathbf{x}_s - \mathbf{x}_I(0)\| + \|\mathbf{x}_s - \mathbf{x}\| = v_It\Big\rbrace. \label{eq:switch_locus}
\end{align}
The following hypotheses are proposed on the basis of empirical evidence.

\begin{hypothesis}\label{hypo:optim_1}
    The extremum points for the optimization problem in (\ref{eq:optim_cost})-(\ref{eq:optim_bounds}) correspond to the independent agent's trajectories with a single-switching control function such that $\mathrm{abs}(\dot{x}(\tau))$ and $\mathrm{abs}(\dot{y}(\tau))$ are constants for all time $0\leq \tau \leq t$.
\end{hypothesis}

\begin{hypothesis}\label{hypo:optim_2}
    Per Hypothesis \ref{hypo:optim_1}, the points in the Cartesian plane where the independent agent switches the control input for the maxima of the optimization problem in (\ref{eq:optim_cost})-(\ref{eq:optim_bounds}) are given by $\underset{\mathbf{x}_s \in \mathcal{E}}{\max}~x_s$ and $\underset{\mathbf{x}_s \in \mathcal{E}}{\min}~x_s$. The points for the minima are given by $\underset{\mathbf{x}_s \in \mathcal{E}}{\max}~y_s$ and $\underset{\mathbf{x}_s \in \mathcal{E}}{\min}~y_s$.
\end{hypothesis}


The above hypotheses can be visualized using the simulation results shown in Figure \ref{fig:ellipse}, which depicts four distinct cases from the numerous simulations that were performed to analyze the extrema of the optimization problem in (\ref{eq:optim_cost})-(\ref{eq:optim_bounds}).

\begin{figure}[htb!]
\centering
\includegraphics[width = 0.65\textwidth]{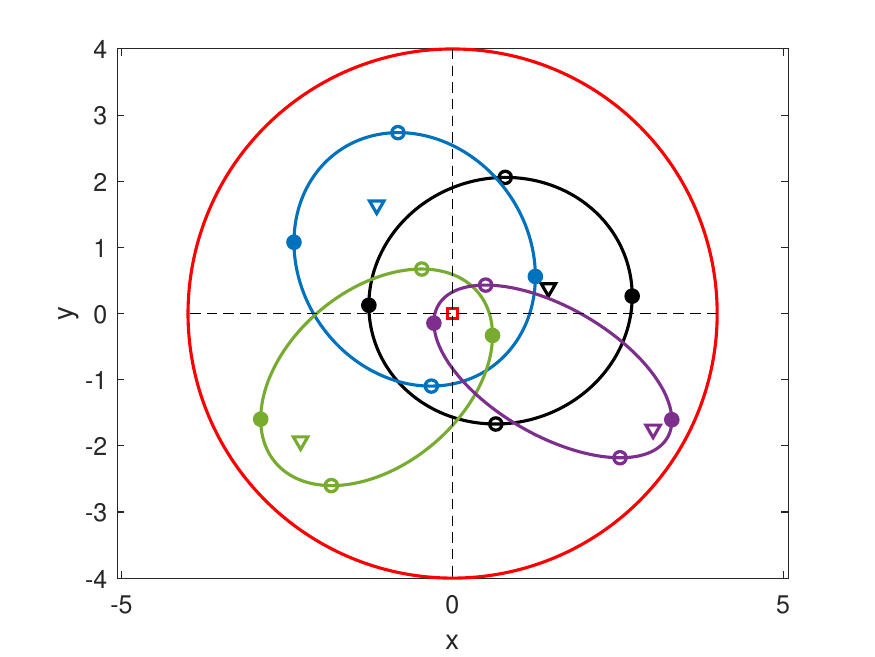}
\caption{Simulation results that visualize the extrema of the optimization problem in (\ref{eq:optim_cost})-(\ref{eq:optim_bounds}), per Hypotheses \ref{hypo:optim_1} and \ref{hypo:optim_2}.}
\label{fig:ellipse}
\end{figure}

The simulations corresponding to Figure \ref{fig:ellipse} were conducted with $\mathbf{x}_I(0) = [0,0]^\top$ (red square), $v_D = 1$, $v_I = 0.5$, $t = 8$.
The four points (blue, black, purple, and green inverted triangles) chosen within $\mathcal{R}_I(t)$ (interior of the red circle) represent a subset of the possible locations of independent agent at time $t$. 
In each case, the corresponding colored ellipse represents the locus of switch points $\mathcal{E}$, as given in (\ref{eq:switch_locus}), that the independent agent could execute with a single-switching control function to reach the selected point from $x_I(0)$.
For each ellipse, the integral in (\ref{eq:optim_cost}) is evaluated for 360 equally spaced points (in terms of angular displacement) on the ellipse to identify the maximum and minimum points. 
On each ellipse in Figure \ref{fig:ellipse}, the maxima points are indicated using filled circles, and the minima points are represented using empty circles.
Hence, it can be observed that the maxima points always correspond to the points with the maximum and the minimum horizontal coordinates on the ellipse.
Similarly, the minima points coincide with the points having the maximum and minimum vertical coordinates on the ellipse. As $\mathbf{x}_I(t)$ varies between the origin and the boundary of $\mathcal{R}_I(t)$, the locus of the switch points changes from circular to elliptical with increasing eccentricity.

\begin{figure}[htb!]
\centering
\includegraphics[width = 0.65\textwidth]{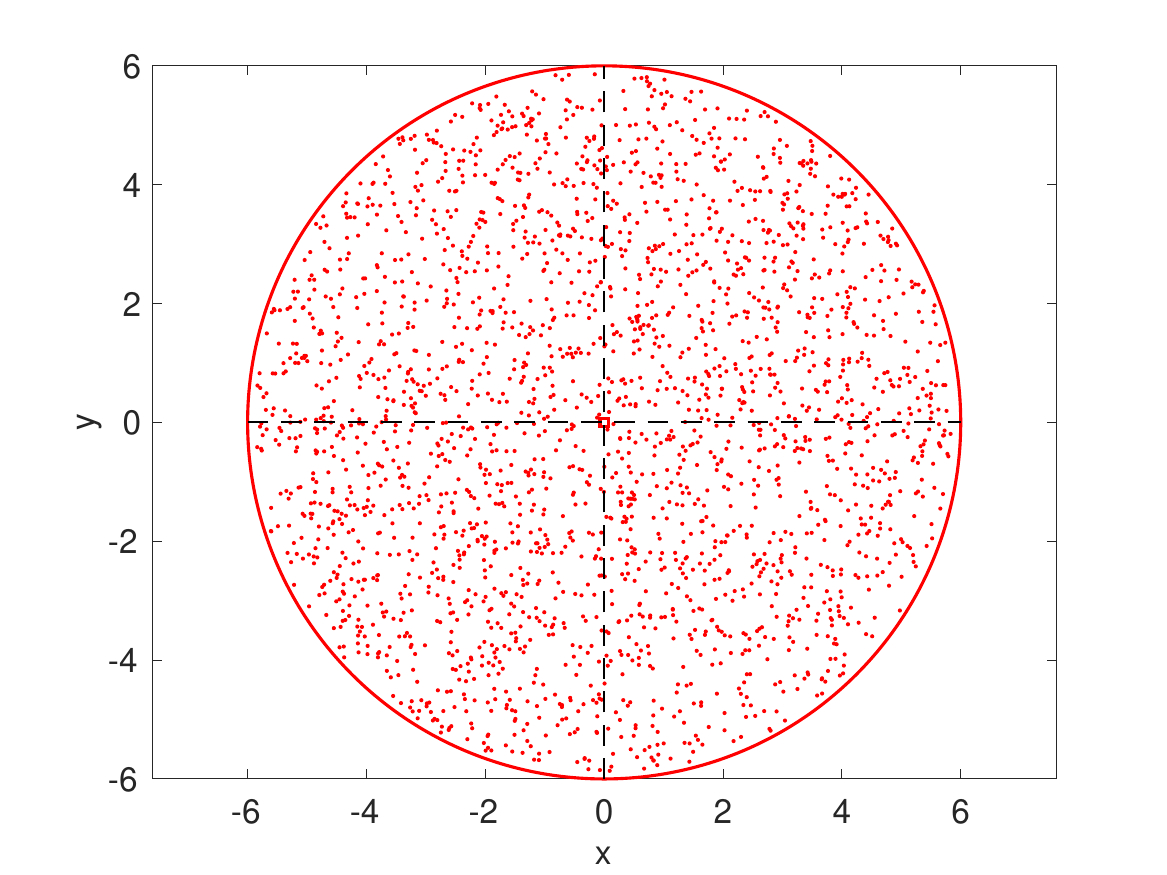}
\caption{Scatter plot representing the points from $\mathcal{R}_I(t)$ that are considered for Monte Carlo simulations.}
\label{fig:ellipse_scatter}
\end{figure}

To further validate Hypothesis \ref{hypo:optim_2}, Monte Carlo simulations were conducted. 
To this end, the simulation parameters for the independent agent are fixed as $\mathbf{x}_I(0) = [0,0]^\top$, $v_I = 1.2$, $t = 5$.
A total of 2,500 points are randomly chosen from the reachable set of the independent agent $\mathcal{R}_I(t)$, which is a circle with radius $6$, using the standard uniform distribution.
A scatter plot of the random points along with $\partial \mathcal{R}_I(t)$ is shown in Figure \ref{fig:ellipse_scatter}.
Four different speed ratios $v_I/v_D = 0.8, 0.6, 0.4, 0.2$, which correspond to $v_D = 1.5, 2, 3, 6$, respectively, are considered for the Monte Carlo simulations.
Consequently, a total of 10,000 distinct ellipses are analyzed to identify the extrema of the integral in (\ref{eq:optim_cost}).
For each ellipse, the extrema analytically computed using Hypothesis \ref{hypo:optim_2} is compared against the extrema that are obtained numerically.
The maximum root mean square error for the four extremum points across the 10,000 ellipses is found to be of the order of $10^{-15}$, thus validating the hypothesis.


\section{Conclusion} 
\label{sec:conclude}
This paper analyzes the ``reachable set" of an agent (termed the dependent agent) in instances where it follows another agent (termed the independent agent) using the constant bearing pursuit strategy. 
It is assumed that the speed of the dependent agent is greater than the speed of the independent agent.
Theoretical results are presented that analytically characterize the dependent reachable set for the instances where the dependent agent's reachable set does not fully engulf the diameter of the independent agent's reachable set.
It is shown that the theoretical results can be extended to the limiting case where the speeds of both agents are equal.
In instances where the dependent agent's reachable set entirely engulfs the diameter of the independent agent's reachable set, a tight bound for the dependent reachable set is provided along with a hypothesis supported by simulation-based empirical evidence.
The study of the dependent reachable set for the constant bearing pursuit strategy led us to examine a novel optimization problem, which resulted in more empirical evidence-based hypotheses related to the geometry of the constant bearing pursuit strategy and the ellipse.

\bibliographystyle{plain}
\bibliography{references}

\end{document}